\definecolor{myurlcolor}{rgb}{0,0,0.7}
\definecolor{myrefcolor}{rgb}{0.8,0,0}
\newcommand{\beq}[0]{\begin{equation}}
\newcommand{\eeq}[0]{\end{equation}}
\newcommand{\bw}[0]{\begin{widetext}}
\newcommand{\ew}[0]{\end{widetext}}
\newcommand{\bc}[0]{\begin{center}}
\newcommand{\ec}[0]{\end{center}}
\newcommand{\bwn}[0]{\begin{widetext}\begin{eqnarray}}
\newcommand{\ewn}[0]{\end{eqnarray}\end{widetext}}
\newcommand{\beqn}[0]{\begin{eqnarray}}
\newcommand{\eeqn}[0]{\end{eqnarray}}
\newcommand{\np}[0]{{\it e.g.}}
\newcommand{\gesy}[0]{genuinely entangled subspaces\;}
\newcommand{\tzn}[0]{{\it i.e.}}
\newcommand{\uroj}[0]{\mathrm{i}}
\newcommand{\ket}[1]{|#1\rangle}
\newcommand{\bra}[1]{\langle #1 |}
\newcommand{\jedynka}[0]{\mathbbm{1}}
\newcommand{\non}[0]{\nonumber\\}
\newcommand{\spann}[0]{\mathrm{span}}
\newtheorem{theorem}{Theorem}
\def\calB{{\cal B}}
\def\calC{{\cal C}}
\def\calG{{\cal G}}
\def\calH{{\cal H}}
\def\calM{{\cal M}}
\def\calP{{\cal P}}
\def\calV{{\cal V}}
\def\frakL{{\frak L}}
\newtheorem{thm}{Theorem}
\newtheorem{lem}[thm]{Lemma}
\newtheorem{defi}[thm]{Definition}
\newtheorem{fakt}[thm]{Fact}
\newtheorem{uwaga}[thm]{Remark}
\newtheorem{obserwacja}[thm]{Observation}
\newcommand{\beu}{\begin{equation}}
\newcommand{\eeu}{\end{equation}}
\newcommand{\be}{\begin{eqnarray}}
\newcommand{\ee}{\end{eqnarray}}
\newcommand{\ba}{\begin{array}}
\newcommand{\ea}{\end{array}}
\newcommand{\ce}{\mathbb{C}}
\newcommand{\cee}[1]{\mathbb{C}^{#1}}
\begin{document}
\title{From  unextendible product bases to genuinely entangled subspaces }
\author{Maciej Demianowicz}
\affiliation{{\it\small Atomic and Optical Physics Division, Department of Atomic, Molecular and Optical Physics, Faculty of Applied Physics and Mathematics,
Gda\'nsk University of Technology, Narutowicza 11/12, 80–-233 Gda\'nsk, Poland}}
\author{Remigiusz Augusiak}
\affiliation{{\it\small Center for Theoretical Physics, Polish Academy of Sciences, Aleja Lotnik\'ow 32/46, 02-668 Warsaw, Poland}}

\begin{abstract}
Unextendible product bases (UPBs) are interesting mathematical objects arising in composite Hilbert spaces that have found various applications in quantum information theory, for instance in a construction of bound entangled states or Bell inequalities without quantum violation.
They are closely related to another important notion, completely entangled subspaces (CESs), which are those that do not contain any fully separable pure state. Among CESs one finds a class of subspaces in which all vectors are not only entangled, but are genuinely entangled.
Here we explore the connection between UPBs and such genuinely entangled subspaces (GESs)
and provide classes of nonorthogonal UPBs that lead to GESs for any number of parties and local dimensions. 
We then show how these subspaces can be immediately utilized for a simple general construction of genuinely entangled states in any such multipartite scenario.
\end{abstract}
\maketitle

\section{Introduction} 

Entangled states 
play a central role in virtually any information processing protocol in quantum networks, for example quantum teleportation or quantum key distribution (see, \np, \cite{Shor-algorithm,teleportation,state-merging,experimental-QEC}). They are also vital for nonlocality and steering -- other valuable resources in quantum information theory \cite{przegladowka-nonlocality,przegladowka-steering}. First considered in bipartite setups, entanglement has been quickly recognized to be a particularly powerful supply when shared among several parties. Of the rich variety of types of entanglement in such setups it is its genuine multiparty manifestation which appears to be the most useful in practice, as for instance in quantum metrology \cite{GezaMetro2012,HyllusMetro2012,AugusiakMetro2016}. In recent years, we have thus witnessed an unrelenting interest in the literature in such states both from the theoretical (see, \np, Ref. \cite{OtfriedTheo2017}) and the experimental (see, \np, Refs. \cite{KlemptExp2014,BrunnerExp2017})  points  of view. 

At the heart of the research on multiparty quantum states lies the problem of the verification whether a state is entangled \cite{Terhal2002,GuhneToth-review}.  In its full generality the problem is known to be extremely difficult \cite{Gurvits-NP,Gharibian-NP} (see also \cite{np-jordik} for recent advances).  From this perspective, construction of states for which some {\it a priori} knowledge about entanglement properties is available is very desirable. One particular approach relies on the construction of completely entangled subspaces (CESs), that is subspaces void of fully product vectors \cite{ces-bhat,ces-partha}. There follows an easy observation that states with support in such subspaces are necessarily entangled, attaining in turn the goal. 
The notion of a CES is intimately connected with the notion of unextendible product bases (UPBs) \cite{upb-bennett,cmp-upb,upb-alon-lovasz,Bravyi,Cohen2008}. The latter are sets of product, possibly mutually non--orthogonal, vectors spanning a proper subset of a given Hilbert space with the property that no other product vector exists in the complement of their span. 
From the very definition of a UPB it follows that the orthogonal complement of a subspace spanned by it is a CES. We can thus attack the problem stated above from a different angle by analyzing a complementary one.  Such approach proved to be very fruitful and resulted in the constructions of entangled states which are positive after the partial transpose \cite{upb-bennett,upb-pittenger}. Notably, UPBs have also found some surprising applications in other areas as they were used to construct Bell inequalities with no quantum violation \cite{bell-no-quantum}.
From this perspective the task of providing means of constructing UPBs becomes particularly important. Most of the efforts in this area have been focused on UPBs with the orthogonality conditions imposed, let us call them orthogonal UPBs (oUPBs), due to their immediate applications mentioned above. 
Despite intensive research \cite{cmp-upb,min-size-upb,johnston-upb-qubit,UPB3x4,UPB4x4}, a fully general construction has not been developed (albeit see \cite{NisetCerf}). At the same time, much less attention  has been devoted to UPBs with the orthogonality condition dropped, so--called non--orthogonal UPBs (nUPBs), and in consequence their applications in quantum information are largely unexplored (see, however, \cite{LeinaasUPB,Skowronek}).

The picture of the relation between product bases and entangled subspaces depicted above is missing an important element.
An apparent weakness of the  approaches so far
to the construction of CESs from UPBs, regardless of the type of the latter,
is that one has not, in principle, any control of the type of entanglement in the arising entangled subspaces. As we discussed earlier, this knowledge is essential in most of the cases as we usually demand the entanglement to be of the genuine multiparty kind. In fact, the already--known UPBs lead to CESs containing biproduct, \tzn, not genuinely entangled, states.  Hence, one is naturally led to the problem of designing UPBs, which by construction would give rise to subspaces  only containing genuinely entangled states.
These subspaces may be called genuinely entangled subspaces (GESs), in analogy to the completely entangled ones. Although not having been explicitely named as above, they seem to have been first considered in \cite{schmidt-rank}, where subspaces with bounded Schmidt rank were analyzed. 
As noted there, a not too large random subspace will typically be genuinely entangled, so the mere existence of GESs is trivially settled. As a matter of fact, such subspaces can be easily constructed from nUPBs. This can be achieved by randomly drawing a properly chosen number of fully product states. This argument was originally presented in Ref. \cite{cmp-upb} in relation to CESs, however, its extension to the case of GESs is straightforward. Although this solves the principal task of a construction of a GES from a UPB, it adds little to an understanding of the mathematical structure of GESs. From this viewpoint, it is desirable to have access to analytical constructions of the latter in the general multiparty case
and to address the problem of the constructions of GESs in full generality in relation to UPBs. This is where our research fits in.

Motivated by the existence of a completely entangled subspace in the orthocomplement of the span of an unextendible product basis, we ask for such bases which by construction guarantee the orthogonal states to be genuinely entangled, or, in other words the resulting CES to be a GES. We turn our attention to nUPBs, which allows us to provide general examples valid for any number of parties holding systems of any local dimensions (see Fig. \ref{podprzestrzenie-rys}), without resorting to arguments about random states and typicality. Importantly, any state supported on a GES is genuinely entangled, and, implementing a well known idea \cite{upb-bennett}, we provide examples of such mixed states in a general multiparty scenario.
Entanglement in these states is particularly easy to be detected and we give entanglement witnesses for them.

\begin{figure}[h!]
\includegraphics[height=4cm,width=7cm]{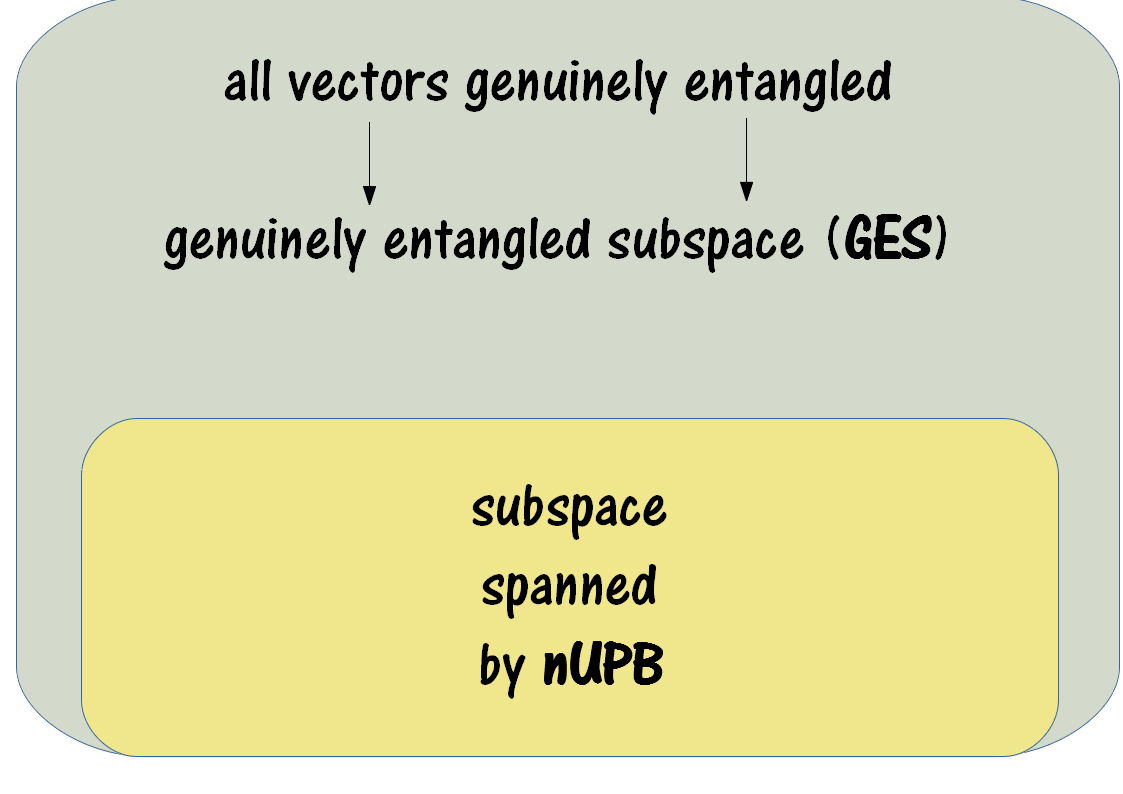}
\caption{
As discussed in the introduction, any $N$--partite Hilbert space with local dimensions $d_i$, $\calH_{d_1,\dots,d_2}=\mathbbm{C}^{d_1}\otimes \cdots \otimes \mathbbm{C}^{d_N}$, can be decomposed into a direct sum of a subspace spanned by a non--orthogonal unextendible product basis (nUPB) and a genuinely entangled subspace (GES), that is a subspace in which all states are genuinely entangled. 
In Theorems \ref{smaller-dim-ges}-\ref{produktowe-lepsze} we consider three different constructions of product bases with the above property. Interestingly, in cases when there are qubit subsystems the non--orthogonality assumption cannot be abandoned (see Section \ref{smaller-gesy}).}\label{podprzestrzenie-rys}
\end{figure}

The paper is organized as follows.  Sec. \ref{preliminary} recalls the terminology relevant for the following parts. In Sec. \ref{giesy} we formally introduce and discuss the notion of a genuinely entangled subspace. Then, in Sec. \ref{smaller-gesy}, 
we show how one can construct such subspaces as ones that are orthogonal to the spans of non--orthogonal unextendible product bases. Further, in Sec. \ref{splatanie-gesy}, we show how  our approach allows for an effortless construction of  genuinely entangled mixed states for any number of parties and an arbitrary local dimension on each site. We also discuss the issue of detecting entanglement of such states with the use of entanglement witnesses. Finally, we conclude in Sec. \ref{konkluzje} with some open questions and an outlook on further research directions.

\section{Preliminaries} \label{preliminary}

We begin with an introduction of the relevant notation and  terminology.


\paragraph{Notation.} In what follows we will be concerned with finite dimensional $N$-partite product Hilbert spaces
\begin{equation}\label{multipartyH}
\mathcal{H}_{d_1,\ldots,d_N}=\mathbbm{C}^{d_1}\otimes\ldots\otimes \mathbbm{C}^{d_N},
\end{equation}
with $d_i$ standing for the dimension of the local Hilbert space corresponding to the system $A_i$; we also use the shorthand 
$A_1 A_2\dots A_N =: \bf{A}$ to denote all subsystems. Pure states are traditionally denoted as $\ket{\psi},\ket{\varphi},\cdots$, potentially bearing subscripts corresponding to respective subspaces, \np, $\ket{\psi}_{A_1A_2\dots}$. 
Column vectors are  simply wrtitten as $(a,b,\dots)$. That is we write $\ket{\cdot}=(\dots)$, omitting for clarity the transposition. We also use the standard notation for tensor products of basis vectors: $\ket{ij}:=\ket{i}\otimes \ket{j}$.

\paragraph{Entanglement.} An $N$--partite pure state $\ket{\psi}_{A_1\dots A_N}$ is said to be {\it fully product} if it can be written as 
\begin{equation}
\ket{\psi}_{A_1\cdots A_N}=\ket{\varphi}_{A_1}\otimes\cdots \otimes\ket{\xi}_{A_N}.
\end{equation}
Otherwise it is {\it entangled}. Among such states there is one distinguished class being our main interest in the present paper, namely genuinely multiparty entangled  ones. A multipartite pure state is called {\it genuinely multiparty entangled} (GME) if 
\begin{equation}
\ket{\psi}_{A_1\cdots A_N}\ne\ket{\varphi}_{S}\otimes \ket{\phi}_{\bar{S}}
\end{equation}
for any bipartite cut (bipartition) $S | \bar{S}$, where $S$ is a subset of $\textbf{A}$ and $\bar{S}:=\textbf{A}\setminus S$ denotes the rest of them. Probably the most well-known example of a GME state is the $N$-qubit Greenberger-Horne-Zeilinger state \cite{GHZ} defined as
\begin{equation}\label{ghz}
\ket{\mathrm{GHZ}_N}=\frac{1}{\sqrt{2}}\left(\ket{0}^{\otimes N}+\ket{1}^{\otimes N}\right).
\end{equation}

On the other hand, if a state does admit the form  $\ket{\psi}_{A_1\cdots A_N} = \ket{\varphi}_{S}\otimes \ket{\phi}_{\bar{S}}$ it is called {\it biproduct}. Fully product states are thus a subclass of the biproduct ones.
 
Moving to the mixed states domain, one says that a state $\rho_{\textbf{A}}$ is {\it biseparable} if it can be written as
\beqn
\rho_{\textbf{A}}=\sum_{S|\bar{S}} p_{S|\bar{S}} \sum_i   q_{S|\bar{S}}^i \varrho^i_{S} \otimes \sigma^i_{\bar{S}},
\eeqn
with $\varrho^i_{S} $ and $\sigma^i_{\bar{S}}$ acting on, respectively, $\calH_{S}$ and $\calH_{\bar{S}}$, Hilbert spaces corresponding to a bipartite cut $S|\bar{S}$.
If a state does not admit such decomposition it is said to be {\it genuinely multiparty entangled} (GME), just as in the case of pure states.

\paragraph{Completely entangled subsaces and unextendible product bases.} 
We start off with a formal definition of completely entangled subspaces. 
\begin{defi}
 A subspace $\calC \subset \calH_{d_1,\dots , d_N}$ is called  a completely entangled subspace (CES) if all $\ket{\psi} \in \calC$ are entangled.
\end{defi}
It is worth stressing that the definition does not specify the type of entanglement of the states. It simply requires them not to be fully product.  

Completely entangled subspaces have been a subject of intensive studies in the literature \cite{ces-bhat,ces-partha,optimal,Sengupta2014,Brannan2017}. In particular, in Refs. \cite{ces-bhat,ces-partha} 
the maximal size of a CES in $\mathcal{H}_{d_1,\ldots, d_N}$
has been shown to be given by 
\begin{equation}
\prod_{i=1}^N d_i-\sum_{i=1}^N d_i+N-1
\end{equation}
and the corresponding examples of the subspaces were constructed. 

The notion of a completely entangled subspace is closely related to the notion of an unextendible product basis \citep{upb-bennett}. The definition of the latter is the following.
\begin{defi}\label{def:upb}
Let there be given a set of fully product vectors  
\begin{equation}\label{UPB}
U=\left\{\ket{\psi_i}\equiv \ket{\varphi_i}_{A_1}\otimes\ldots\otimes \ket{\xi_i}_{A_N}\right\}_{i=1}^u,
\end{equation}
$\ket{\psi_i}\in \calH_{d_1,\dots , d_N}$, with the property that it spans a proper subspace of $\calH_{d_1,\dots , d_N}$, i.e., $u<\dim\mathcal{H}_{d_1,\ldots,d_N}$, and no fully product vector exists in the complement of its span. Then,
if $\ket{\psi_i}$'s are mutually orthogonal, $U$ is called an \textbf{orthogonal unextendible product basis (oUPB)}. On the other hand, if the members of $U$ do not share this property, $U$ is called a \textbf{non--orthogonal unextendible product basis (nUPB)}. 
\end{defi}
In cases when the orthogonality property in a product basis is not particularized, we use a general term unextendible product basis (UPB)  encompassing both possibilities, {\it i.e.}, the basis can be either orthogonal or non--orthogonal.
Moreover, when $N=2$ (or the parties are simply split into two groups) we speak of bipartite UPBs, otherwise -- multipartite ones.

For an illustration of an oUPB consider the three-qubit 
Hilbert space $\mathcal{H}_{2,2,2}=(\mathbbm{C}^2)^{\otimes 3}$
and the following set of fully product vectors
\begin{equation}\label{upb-3kubity}
U=\{\ket{000},\ket{1\overline{e}e},\ket{e1\overline{e}},\ket{\overline{e}e1}\},
\end{equation}
where $\{\ket{0},\ket{1}\}$ and $\{\ket{e},\ket{\overline{e}}\}$
are two different orthonormal bases in $\mathbbm{C}^2$. It is not difficult to see that for any $\ket{e}\neq \ket{0},\ket{1}$ this set indeed forms a four-element oUPB, \tzn, there is no fully product vector orthogonal to every member of $U$, which itself is composed of orthogonal vectors.  Notice, however, that there does exist a biproduct vector orthogonal to all of the members of this basis. For example, a vector of this kind is given by $\ket{1}\otimes \ket{\xi}$, where $\ket{\xi}$ is orthogonal to the span of $ \{ \ket{\bar{e}e},\ket{1\bar{e}},\ket{e1}\}$. It is useful to keep in mind this observation for further purposes.

Let us now move to the case of non--orthogonal basis vectors and consider
the following set of, this time, bipartite vectors from $\mathcal{H}_{d,d}=\mathbbm{C}^d\otimes \mathbbm{C}^d$
given by 
\begin{equation} \label{pre-nUPB}
U'=\{\ket{e}\otimes \ket{e}\,|\,\ket{e}\in\mathbbm{C}^d\}. 
\end{equation}
In other words, the set consists of all product symmetric vectors from $\mathcal{H}_{d,d}$ and 
it is not difficult to see that $\mathrm{span}\:U'$ is simply the symmetric subspace of
$\mathcal{H}_{d,d}$. The subspace orthogonal to $\mathrm{span}\:U'$, being
the antisymmetric subspace of $\mathcal{H}_{d,d}$, 
is completely entangled as, quite trivially, it does not contain any 
product vector. The set $U'$ thus has the property of unextendibility required by Definition \ref{def:upb}.  However, it is not a basis yet as it has more elements (in fact, it has an infinite number of them) than the dimension of the subspace spanned by them. 
Selecting ${d+1 \choose 2}$, the dimension of the bipartite symmetric subspace, linearly independent symmetric vectors, we can turn $U'$ into a basis, which here is non--orthogonal and unextendible, {\it i.e.}, it is an nUPB. 
Using the Gram--Schmidt procedure one can make the chosen vectors orthogonal. Importantly, however, some of them will necessarily be entangled.  As an illustration to the construction considered in this paragraph, consider the bipartite qubit case ($d=2$). Then, $\dim \spann \;U'={3 \choose 2} =3$ and we choose this number of product vectors to construct the related nUPB. For example, one can take $\{\ket{0}\ket{0}$, $\ket{1}\ket{1}$, $\ket{+}\ket{+}\}$, where $\ket{+}=1/\sqrt{2}(\ket{0}+\ket{1})$.
Now, the orthogonalization produces another basis for $\spann\; U'$: $\{\ket{00},\ket{11},\ket{01}+\ket{10}\}$, which, however, is not product anymore. This example is interesting in that there is no oUPB at all in $\cee{2}\otimes \cee{2}$ (even more generally, in $\cee{2}\otimes \cee{d}$; we comment on the consequences of this for our results in Section \ref{smaller-gesy}).

 Non--orthogonal UPBs have been considered in Refs. \cite{upb-pittenger,ces-partha,ces-bhat,LeinaasUPB,Skowronek} and are the main scope of the present paper.


The crucial observation linking the notions of completely entangled subspaces and unextendible product bases is that the orthogonal complement of a subspace spanned by a UPB, whether its members are mutually orthogonal or not, is a CES. Notice, however, that the implication in the opposite direction is not true in general \cite{WalgateScott,Skowronek}. That is, the orthocomplement of a CES does not necessarily admit a UPB, neither orthogonal nor non--orthogonal. Actually, an even stronger result holds: the orthocomplement of a CES can be a CES itself \cite{Skowronek,WalgateScott}. 
\section{Genuinely entangled subspaces}\label{giesy}

It is obvious that while fully product states are absent in a CES, there still might be present other biproduct states. 
This basic observation motivates an introduction of the notion of subspaces void of any biproduct states or, in other words, subspaces in which entanglement is solely of the genuinely multiparty nature \cite{schmidt-rank}.
We propose the name
{\it genuinely entangled subspaces} for them. Their formal definition  is as follows.

\begin{defi} 
A subspace $\calG \subset \calH_{d_1,\dots , d_N}$ is called a genuinely entangled subspace (GES)  of $\calH_{d_1,\dots , d_N}$ if all $\ket{\psi} \in \calG$ are genuinely multiparty entangled.
\end{defi}
Clearly, every GES is also a CES. However, the opposite implication is generally 
not true and therefore the set of all genuinely entangled subspaces is a proper subset of
the set all completely entangled ones.
Drawing from the terminology used in the previous section, one could say that a GES is such a subspace that is entangled across any of the cuts. Equivalently, one could also say that it is a subspace in which all states are of Schmidt rank at least two across any of the bipartite cuts (cf. \cite{schmidt-rank}).

A well known example of such a subspace is the already mentioned antisymmetric space in the Hilbert space of $N$ qudits. 
These subspaces, however, as Observation \ref{wymiary-ges} below shows, are quite small in the sense that their dimensionality, which is ${d \choose N}$, is relatively far from the maximal dimension available for a GES. 
In an extreme case of $N >d$, the antisymmetric subspace is empty, while there are always nontrivial  GESs (with dimension larger than one) in any dimension.

A fundamental question arises about how the additional constraint about the genuine multiparty entanglement of the states in a GES determines its maximal possible dimension, denote it $D_{\mathrm{max}}^{GES}$.
The following simple observation gives a complete answer \cite{schmidt-rank}.

\begin{obserwacja}\label{wymiary-ges}
Given $\calH_{d_1,\dots,d_N}$, with $2 \le d_i \le d_{i+1}$, the maximal achievable  dimension of a GES is 
\beq\label{dim-ges-max}
D_{\mathrm{max}}^{\mathrm{GES}}=\prod_{i=1}^{N} d_i-(d_1+d_2\cdot d_3\cdot \ldots \cdot d_N)+1.
\eeq
\end{obserwacja}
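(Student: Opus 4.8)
The plan is to establish the formula in \eqref{dim-ges-max} by two matching bounds: an upper bound showing that no GES can have larger dimension, and a lower bound exhibiting an explicit GES of exactly that dimension. For the upper bound, I would first observe that a subspace $\calG$ fails to be a GES precisely when it contains a biproduct vector, i.e.\ a vector of the form $\ket{\varphi}_S\otimes\ket{\phi}_{\bar S}$ for some bipartite cut $S|\bar S$. It therefore suffices to find, for one particular cut, the smallest codimension forcing such a vector to appear, and then minimize over cuts. Fixing a cut $S|\bar S$ gives a bipartite space $\ce^{D_S}\otimes\ce^{D_{\bar S}}$ with $D_S=\prod_{i\in S}d_i$, $D_{\bar S}=\prod_{i\in \bar S}d_i$; the relevant classical fact (used already in \cite{ces-bhat,ces-partha} for the CES bound) is that any subspace of $\ce^{m}\otimes\ce^{n}$ of dimension exceeding $mn-m-n+1$ must contain a product vector. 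Hence to avoid product vectors across the cut $S|\bar S$ the dimension of $\calG$ can be at most $D_SD_{\bar S}-D_S-D_{\bar S}+1$. Since $\calG$ must avoid product vectors across \emph{every} cut simultaneously, $\dim\calG$ is at most the minimum of this quantity over all bipartitions, and an elementary check (the product $D_SD_{\bar S}$ is always $\prod_i d_i$, so one is just maximizing $D_S+D_{\bar S}$) shows the minimum is attained at the most unbalanced cut, namely $S=\{1\}$ versus $\bar S=\{2,\dots,N\}$, since $d_1$ is the smallest local dimension; this yields exactly the right-hand side of \eqref{dim-ges-max}.

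For the lower bound I would exhibit a concrete subspace of dimension $\prod_i d_i-(d_1+d_2\cdots d_N)+1$ all of whose nonzero vectors are GME. A natural candidate, adapting the CES constructions of \cite{ces-bhat,ces-partha}, is to take the span of suitable ``generalized antidiagonal'' vectors, e.g.\ vectors built from a polynomial/Vandermonde parametrization so that any vector in the span, when read as a tensor across any cut, has matrix rank at least two. Concretely, one can use vectors of the form $\ket{v(z)}=\bigotimes_{i=1}^N (1,z,z^2,\dots,z^{d_i-1})$ and take $\calG$ to be the orthogonal complement, within an appropriately chosen $\prod d_i-(d_1+d_2\cdots d_N)+1$–dimensional space, of the product vectors spanning its complement; the key point is to check that the complement is spanned by product vectors in such a way that each cut's ``missing'' dimension is exactly $D_S+D_{\bar S}-1$ with the worst cut controlling everything. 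Alternatively, and more in the spirit of this paper, one can simply invoke that such a GES arises as the orthocomplement of an nUPB of the right size — but since Observation \ref{wymiary-ges} is cited from \cite{schmidt-rank}, the cleanest route is to reference the Schmidt-rank-$\ge 2$ subspace construction given there and verify its dimension matches.

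The main obstacle I expect is the lower bound: producing an explicit family of vectors, valid for \emph{arbitrary} local dimensions $d_1\le\dots\le d_N$, whose span simultaneously avoids biproduct vectors across \emph{all} $2^{N-1}-1$ cuts while hitting the dimension count exactly. Getting the dimension right is a bookkeeping matter, but ensuring Schmidt rank at least two across every cut requires a uniform algebraic device — typically a Vandermonde/coprimality argument showing that a nontrivial linear combination of the spanning product vectors cannot factor across any cut. The upper bound, by contrast, is essentially immediate once one reduces to a single cut and applies the known bipartite result, together with the trivial observation that $d_1$ being smallest picks out the extremal bipartition. If one is content to cite \cite{schmidt-rank,ces-bhat,ces-partha} for the hard construction, the whole argument collapses to the reduction-over-cuts plus the minimization, which is a short paragraph.
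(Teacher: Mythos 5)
Your two-bound strategy matches what underlies Observation \ref{wymiary-ges} in the paper: the paper itself gives no formal proof (the result is quoted from \cite{schmidt-rank}), but the surrounding text justifies it exactly as you do --- the upper bound comes from the bipartite fact that any subspace of $\cee{m}\otimes\cee{n}$ of dimension exceeding $(m-1)(n-1)$ contains a product vector, applied to every cut and minimized, with the most unbalanced cut $A_1|A_2\cdots A_N$ extremal because $d_1$ is the smallest factor; the lower bound is obtained by genericity (a random subspace of that dimension, or equivalently the orthocomplement of $d_1+d_2\cdots d_N-1$ randomly drawn fully product vectors, is typically a GES). Your reduction-over-cuts and the convexity argument for why the singleton cut $\{A_1\}$ maximizes $D_S+D_{\bar S}$ are correct.

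One concrete warning about your proposed explicit lower-bound construction: the candidate $\ket{v(z)}=\bigotimes_{i=1}^N(1,z,\dots,z^{d_i-1})$ is precisely the construction that Section \ref{smaller-gesy} shows \emph{fails} to produce a GES. Across a cut such as $A_1A_2|A_3\cdots A_N$ the local vectors on $A_1A_2$ have repeated monomials $z^{k+l}$ among their coordinates, hence do not span the local space, and a biproduct vector survives in the orthocomplement; one only gets a CES this way. Indeed, none of the explicit constructions in this paper (Theorems \ref{smaller-dim-ges}--\ref{produktowe-lepsze}) saturates $D_{\mathrm{max}}^{\mathrm{GES}}$ --- the best achieves $d^{N-2}(d-1)^2 < (d^{N-1}-1)(d-1)$ --- and saturating the bound constructively is listed as an open problem in the conclusions. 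So for the lower bound you must genuinely fall back on the random/generic argument (or the construction of \cite{schmidt-rank}); the Vandermonde shortcut is not available.
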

In fact, a randomly chosen subspace
of dimension $D_{\mathrm{max}}^{\mathrm{GES}}$ in $\mathcal{H}_{d_1,\ldots, d_N}$ will typically be genuinely entangled \cite{schmidt-rank}. From a perspective more relevant for the current approach, as discussed in the introduction, also a set of $(d_1+d_2\cdot d_3\cdot \ldots \cdot d_N)-1$ fully product vectors will have in the orthocomplement of their span a GES of the above dimension. The argument holds for other dimensions of GESs too.

In this paper, we mainly concentrate, for simplicity, on the case of equal dimensions, \tzn, 
$d_1=\ldots =d_N=d$, in which case $\calH_{d_1,\dots,d_N}$ is denoted simply as
$\calH_{N,d}$. Then,
\beq \label{max-dim-d}
D_{\mathrm{max}}^{GES}=(d^{N-1}-1)(d-1).
\eeq

It is worth analyzing two limits of the above dimension: the increasing local dimensions and the increasing number of parties. It holds that for large $d$  the dimension of a maximal GES tends to the dimension of the full space, while for large number of parties $N$ the fraction  $D_{\mathrm{max}}^{GES}/ \dim \calH_{d^N}=
(d^{N-1})(d-1) / d^N$ goes to $1-\frac{1}{d}$.
\section{Genuinely entangled subspaces from unextendible product bases}\label{smaller-gesy}

We now move to the main body of the present work, where we consider the problem of a general  construction of genuinely entangled subspaces from unextendible product bases. 

Let us begin with a simple but crucial general observation.
\begin{uwaga}\label{remark}
A multipartite UPB has a GES in the orthocomplement of its span if and only if it is a bipartite UPB across any of the possible cuts in the parties.
\end{uwaga}
 This means that although we focus on the general $N$--party case, our considerations in fact reduce to repeated analyses of the two--party instances of the problem and we can make use of the tools developed for this case. 

Remark \ref{remark} implies, in particular, that in cases when at least one of the parties holds a qubit system, no oUPB can give rise to a GES. This stems from a well--known fact that there do not exist bipartite oUPBs in $2 \otimes d$ systems \cite{upb-bennett}. The same cannot be said if all $d_i \ge 3$ and, in fact, there are constructions available in these setups \cite{cmp-upb,NisetCerf}. Still, to our knowledge, no already-known oUPB  defines a GES in its complement. 
Furthermore, as we argued before, there exist genuinely entangled subspaces of arbitrary dimensions, and they can be obtained from nUPBs by a random draw of multiparty fully product states \cite{cmp-upb,Skowronek}. On the other hand, it is known that oUPBs cannot exist with any, {\it a priori} accessible, cardinality. This limits the possible range of applicability of any potential approach to the construction of GESs based on oUPBs. It might also well be the case that such an approach  is excluded for fundamental reasons. We do not have, however, enough evidence to support any of the  cases and we leave this problem open here. 

With the goal being a general construction working for any number of parties $N$ and local dimensions $d_i$, including $d_i=2$, we thus look into the case of nUPBs in the search of UPBs giving rise to GESs. We obtain both small dimensional GESs and large ones as well.

We will need an observation concerning spanning properties of tuples of local vectors stemming from sets of product vectors. The following holds.
\begin{lem}\label{glowny-lemat}  \cite{upb-bennett} (see also \cite{upb-pittenger,Skowronek})
Let there be given a set of product vectors $B=\{ \ket{\varphi_x}\otimes \ket{\phi_x} \}_{x}$\; from $\cee{m}\otimes\cee{n}$ with cardinality $|B| \ge m+n-1$. 
If any $m$--tuple of vectors $\ket{\varphi_x}$ spans $\cee{m}$ and any $n$--tuple of $\ket{\phi_x}$'s spans $\cee{n}$, then there is no product vector in the orthocomplement of $\spann B$.
\end{lem}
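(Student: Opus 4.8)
The plan is to argue by contradiction. Suppose there is a product vector $\ket{a}\otimes\ket{b}\in\cee{m}\otimes\cee{n}$ orthogonal to every element of $B$. Write the orthogonality conditions as $\inner{\varphi_x}{a}\inner{\phi_x}{b}=0$ for all $x$, so that for each index $x$ at least one of the two factors vanishes. This naturally partitions the index set into $X_{\calA}=\{x:\inner{\varphi_x}{a}=0\}$ and $X_{\calB}=\{x:\inner{\phi_x}{b}=0\}$ (with overlap allowed); together they cover all of $B$, hence $|X_{\calA}|+|X_{\calB}|\ge|B|\ge m+n-1$.

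The next step is to bound the sizes of these two sets using the spanning hypotheses. If $|X_{\calA}|\ge m$, then by assumption some $m$-tuple among $\{\ket{\varphi_x}\}_{x\in X_{\calA}}$ spans $\cee{m}$; but those vectors are all orthogonal to the nonzero vector $\ket{a}$, which is impossible. Hence $|X_{\calA}|\le m-1$, and symmetrically $|X_{\calB}|\le n-1$. Adding these gives $|X_{\calA}|+|X_{\calB}|\le m+n-2$, contradicting the lower bound $m+n-1$ from the previous paragraph. Therefore no such product vector exists.

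The only genuinely delicate point is handling indices $x$ that lie in \emph{both} $X_{\calA}$ and $X_{\calB}$: the covering inequality $|X_{\calA}|+|X_{\calB}|\ge|B|$ still holds because every $x$ is in at least one of the sets, so double-counting overlaps only helps us. One should also note the edge cases: $\ket{a}\neq 0$ and $\ket{b}\neq 0$ since a product vector is by definition nonzero, which is exactly what is needed for the spanning contradiction to bite. I expect essentially no obstacle here — the argument is a short pigeonhole count — the only thing to be careful about is stating the $m$-tuple (resp. $n$-tuple) hypothesis precisely enough that "a set of $\ge m$ vectors all orthogonal to a fixed nonzero $\ket{a}$" is immediately ruled out.
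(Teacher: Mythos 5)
Your proof is correct and is essentially the paper's own argument: the paper phrases it as the impossibility of partitioning $B$ into $B_1\cup B_2$ with deficient local ranks, which is exactly your cover $X_{\calA}\cup X_{\calB}$ together with the bounds $|X_{\calA}|\le m-1$, $|X_{\calB}|\le n-1$ and the pigeonhole count against $|B|\ge m+n-1$. You merely make explicit the counting that the paper leaves implicit, so there is nothing to add.
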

It is instructive to realize why this is true.
If we could partition $B$ as $B=B_1\cup B_2$, so that  the local rank of $B_1$ as seen by the first party was strictly smaller than $m$, and similarly for $B_2$ -- its local rank as seen by the second party was strictly smaller than $n$, then it would be possible to find a product vector $\ket{f}\otimes\ket{g}$
in $(\spann B)^{\perp}$, the orthocomplement of $\spann B$. 
We could then just take $\ket{f}$ orthogonal to the span of $\ket{\varphi_x}$'s appearing in $B_1$ and $\ket{g}$ orthogonal to the span of $\ket{\phi_x}$'s appearing in $B_2$. 
With the properties of $\ket{\varphi_x}$'s and $\ket{\phi_x}$'s as given by the lemma it is clearly not possible to find such a partition: for at least one set in any partition the local rank will attain the dimension of the local space. 


 We will refer to the properties of local vectors specified by Lemma \ref{glowny-lemat} shortly as to the {\it spanning}. Since it is a very important notion for the remainder of the paper  we single out its formal definition.
\begin{defi}\label{def-span}
 Given a set of vectors $\ket{x_i}_S\otimes \ket{y_i}_{\bar{S}}\in\cee{m}\otimes\cee{n}$, it is said that the spanning on $S$ holds for this set if any $m$--tuple of vectors $\ket{x_i}$ spans $\cee{m}$. Similarly, the spanning on $\bar{S}$ holds if any $n$--tuple of $\ket{y_i}$'s spans $\cee{n}$. In other terms, $\ket{x_i}$'s and $\ket{y_i}$'s possess the spanning property.
\end{defi}
%


 Sets of product vectors with the spanning property on both subsystems can be easily constructed with the aid of vectors being rows of Vandermonde matrices, that is vectors of the form
 \beqn
 \ket{v_p(a)}= \sum_{i=0}^{p-1}a^i\ket{i}=(1,a,a^2,a^3,\dots,a^{p-1})\in \cee{p} .
 \eeqn
  We will call them {\it Vandermonde vectors}. Such vectors, share, exactly as we need, the property that any $r$--tuple ($r \le p$) of them with $r$ different values of $a$ spans an $r$--dimensional subspace of $\cee{p}$. We can thus take
\beqn\ket{\varphi_i}=\ket{v_m(\lambda_i)}, \quad \ket{\phi_i}=\ket{v_n(\lambda_i)},
\eeqn
 with some arbitrary  $\lambda_i$'s such that $\lambda_i \ne \lambda_j$ for $i \ne j$, to construct a set of vectors 
 \beqn
 \{ \ket{\varphi_i}\otimes \ket{\phi_i} \}_{i=1}^s, \quad s\ge m+n-1,
 \eeqn
  for which the spanning holds on both subsystems. By Lemma \ref{glowny-lemat} we then conclude that the subspace orthogonal to the span of these vectors is void of product vectors, in other words, it is completely entangled.

In fact, this type of reasoning transfers without basically any changes to the multiparty case. More precisely, one constructs a set 
\beqn
\{\ket{\psi^{(1)}_i}\otimes \cdots \otimes \ket{\psi^{(N)}_i}\}_{i=1}^{s},\quad \ket{\psi_i^{(j)}}=\ket{v_{d_j}(\lambda_i)},\nonumber\\
\hspace{-3.5cm}\quad s \ge \sum_{j=1}^N d_j+N-1,
\eeqn
 with 
 $\lambda_i\ne \lambda_j$ for $i \ne j$,
 for which $\ket{\psi_i^{(j)}}$'s have the spanning property for each $j$. An argument virtually the same as the one given as a justification of Lemma \ref{glowny-lemat} can be applied here and one concludes that the subspace orthogonal to the span of the vectors given above is completely entangled, {\it i.e.}, it is void of {\it fully} product vectors \cite{ces-partha}.

The described method as it stands cannot be, however, applied  to a construction of genuinely entangled subspaces for the following reason. Take the set of vectors as above with $s \ge \prod_{j=1}^N d_j-D_{\mathrm{max}}^{\mathrm{GES}}$ [the lower bound is the minimal number necessary; see Eq. (\ref{dim-ges-max}) for the value of $D_{\mathrm{max}}^{\mathrm{GES}}$]. Recall that a GES must be a CES when considered in any bipartite cut.  Consider any such cut, {\it e.g.}, $A_1 A_2 | A_3 \dots A_N$. Locally on subsystem $A_1A_2$ the vectors are given by $\ket{\psi_i^{(1)}}\otimes \ket{\psi_i^{(2)}}=\sum_{k=0}^{d_1-1}\sum_{l=0}^{d_2-1} \lambda_i^k \lambda_i^l \ket{k}\ket{l}$. Clearly, since we have repeating powers of $\lambda_i$ in the coordinates of these vectors they do not
 span the whole space on $A_1 A_2$. It then easily follows that we can find a vector $\ket{f}_{A_1A_2}$ orthogonal to these vectors, which, in turn, implies that there is a product vector $\ket{f}_{A_1A_2}\otimes \ket{g}_{A_3\dots A_N}$, with $\ket{g}$ being arbitrary, orthogonal to any of the spanning vectors. As such, the CES under consideration is not a GES. 

A seemingly straightforward way out would be to use different sets of numbers $\{\lambda_{ij}\}$ for each subsystem $j$, instead of using the same set $\{\lambda_i\}$ for every party. Nevertheless, these sets would have to be very carefully chosen to guarantee the spanning property locally for any bipartition (note that here we are talking about a particular approach based on Lemma \ref{glowny-lemat}, which only is a sufficient condition for a product basis to be unextendible). Without any hint about how to do it this seems a formidable task in the general case of arbitrary local dimensions and number of parties.
It should be noted though that random sets of $\lambda_{ij}$'s in principle would do the job,  but then the construction would not be much different than just taking a random GES. 
 Such subspaces are not in the range of our interest since we are concerned with GESs with well defined structures as they are subsequently utilized in constructions of GME states (see Section \ref{splatanie-gesy}).

To circumvent the difficulties exposed above we  put forward a different approach, in which basis vectors have, by construction, the spanning property  (see Definition \ref{def-span}) locally  for any bipartite cut. By Lemma \ref{glowny-lemat}, this implies that the orthocomplement to the span of such vectors is a GES.


Let us now give an  overview of this method.
As indicated earlier, we concentrate on the case of equal local dimensions, but the methodology remains the same for other cases.


  We consider continuous sets of fully product vectors %
\beq\label{upb-alfa}
\calB=\{\ket{\Psi(\alpha)}\equiv\bigotimes_{k=1}^{N}\ket{\psi_k(\alpha)}_{A_k}\; |  \alpha\in\ce\},
\eeq
with the local states $\ket{\psi_k(\alpha)}\in \mathbb{C}^{d}$ assumed to have coordinates being either monomials or polynomials of $\alpha$. They   are chosen in such a way that the coordinates of the vectors
\beqn\label{podzbiory}
\bigotimes_{k\in I}\ket{\psi_k(\alpha)}_{A_k},\quad I\subset \{1,2,\dots, N\},
\eeqn
are linearly independent  functions of $\alpha$ for any $I$, which ensures that locally, for any partition, the vectors span corresponding whole spaces on subsystems. As we have already realized, this precludes using Vandermonde vectors $v_d(\alpha)$ directly as $\ket{\psi_k(\alpha)}$'s: tensor products of Vandermonde vectors have repeating  monomials of $\alpha$ in the coordinates and thus such constructed vectors do not span whole spaces of the subsystems.  In principle, the linear independence  is only a necessary condition if one wants to construct a UPB. Here, however, it  also  turns out sufficient. The argument goes as follows. Let $u$ be the dimension of the subspace spanned by the vectors from $\calB$, \tzn,
\beq \label{dim-u}
u=\dim \spann\; \calB.
\eeq
Since $\calB$ is a continuous set we can choose $u$ values of $\alpha$ so that the vectors from the set
\beqn \label{baza}
\bar{\calB}=\{ \ket{\Psi_i}\equiv   \ket{\Psi(\alpha_i)} \}_{i=1}^{u},
\eeqn
$\alpha_i\in \ce$, span the same subspace as those from $\calB$, and locally have the spanning property  for any bipartite cut. Due to Lemma \ref{glowny-lemat},  there is no biproduct vector in the orthocomplement of $\spann\;\bar{\calB}$, meaning that $\bar{\calB}$ is a UPB giving rise to a GES. The details of the derivation are given in Appendix \ref{App-A}.

The procedure discussed above makes a direct correspondence between the sets $\calB$ and $\bar{\calB}$. For this reason, we will identify UPBs  with the sets $\calB$ from (\ref{upb-alfa}), as the latter provide a compact description of the corresponding UPBs.
 
%
%

Common to our constructions is the form of $\ket{\psi_k(\alpha)}$'s for $k=2,\dots,N$, which is
\beqn\label{common}
\ket{\psi_k(\alpha)}=(1,\alpha ^{d^{N-k}},\alpha ^{2d^{N-k}},\dots,\alpha ^{(d-1)d^{N-k}}).
\eeqn
The following then holds:
\beqn\label{produktowy-rozpis}
\bigotimes_{k=2}^{N}\ket{\psi_k(\alpha)}_{A_k}&=& \bigotimes_{k=2}^{N} (1,\alpha ^{d^{N-k}},\alpha ^{2d^{N-k}},\dots,\alpha ^{(d-1)d^{N-k}})_{A_k}\non &=&  (1,\alpha,\alpha^2,\alpha^3,\cdots,\alpha^{d^{N-1}-1})_{A_2A_3\cdots A_N}\non
&=& \ket{v_{d^{N-1}}(\alpha)}_{A_2A_3\cdots A_N}.
\eeqn
That is, instead of using Vandermonde vectors on each party, we use them on the $(N-1)$-partite subsystem $A_2A_3\dots A_N = \textbf{A}\setminus A_1 $ of all the parties.
Since the entries of (\ref{produktowy-rozpis}) are linearly independent monomials,  with such a choice,  we guarantee  that also the coordinates of all the vectors of the type (\ref{podzbiory}) with $I\subset \{2,3,\dots,N\}$ are  linearly independent monomials of $\alpha$ (such result is true whenever linearly independent functions are involved, see the following discussion and Appendix \ref{App-B}). 
We then consider different choices for
$\ket{\psi_1(\alpha)}$, such that the linear independence of coordinates considered as functions of $\alpha$  of the proper vectors also holds on every proper subset of all the parties, including the party $A_1$. This ensures, as discussed above (see also Appendix \ref{App-A}), that the spanning in the derived basis (\ref{baza}) holds for any bipartition and we can make use of Lemma \ref{glowny-lemat} to infer that a given UPB leads to a GES. Importantly, to show linear independence on subsystems containing $A_1$ we do not need to consider all such subsystems -- it is sufficient to consider only $(N-1)$--partite ones and the result for the ones with a smaller number of parties then follows; this quite obvious result is given for the ease of reference in Appendix \ref{App-B}. Whilst the proof of Theorem \ref{smaller-dim-ges} does not refer to this observation, proofs of Theorems \ref{N-kuditow}-\ref{produktowe-lepsze} heavily exploit it to reduce the effort in computation.

While we care about linear independence of the coordinates, at the same time we require the condition $\dim \spann \calB < d^N$ to hold, that is the resulting GES to be nonempty.


 To compute the dimension of the latter we first find $u$ [Eq. (\ref{dim-u})] by counting linearly independent functions of $\alpha$ in the  coordinates of  $\ket{\Psi(\alpha)}$'s, and then substract it  from $d^N$, the dimension of the full Hilbert space $\calH_{\textbf{A}}$.  It is thus an important task to make the number $u$ the smallest possible, so that the arising GES is large.

\subsection{Monomial coordinates of vectors}
We first look into the case of monomial coordinates of the vectors in an nUPB.

We begin with a simple, we might even call it brute--force, construction
of  GESs of small dimensionality constant in $N$.
\begin{theorem}\label{smaller-dim-ges}
Let $V_1$ be the following set of product vectors from 
$(\mathbbm{C}^d)^{\otimes N}$:
\beq\label{smaller-spanujace}
V_1=\left\{\ket{\psi_1^{(1)}(\alpha)}_{A_1}\otimes\ket{\psi_2(\alpha)}_{A_2}\otimes\ldots\otimes\ket{\psi_N(\alpha)}_{A_N}\,|\,\alpha\in\mathbbm{C}\right\},
\eeq
where $\ket{\psi_k(\alpha)}$, $k=2,\ldots,N$, are given by 
Eq. (\ref{common}), whereas $\ket{\psi_1^{(1)}(\alpha)}$ is defined through 
\beq
\ket{\psi_1^{(1)}(\alpha)}=\big(1,\alpha^{\widetilde{d}},\alpha^{2\widetilde{d}}\dots,\alpha^{(d-1)\widetilde{d}}\big)
\eeq
with
\beq
\widetilde{d}:=\sum_{k=2}^{N-1}(d-1)d^{N-k}+1=d^{N-1}-d+1.
\eeq
Then, the subspace orthogonal to $\mathrm{span}(V_1)$ is a GES
of dimension $(d-1)^2$.
\end{theorem}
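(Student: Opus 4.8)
The plan is to run this construction through the general procedure set up just before the theorem (see also Appendix~\ref{App-A}). That procedure reduces the claim to two facts about the monomials occurring as coordinates of the vectors $\ket{\Psi(\alpha)}\equiv\ket{\psi_1^{(1)}(\alpha)}_{A_1}\otimes\bigotimes_{k=2}^N\ket{\psi_k(\alpha)}_{A_k}\in V_1$. Writing $\ket{\psi_1(\alpha)}:=\ket{\psi_1^{(1)}(\alpha)}$ for brevity, these are: (i) for every nonempty proper $I\subsetneq\{1,\dots,N\}$ the coordinates of $\bigotimes_{k\in I}\ket{\psi_k(\alpha)}$ are linearly independent functions of $\alpha$; and (ii) the number $u$ of distinct monomials among the coordinates of $\ket{\Psi(\alpha)}$ equals $d^N-(d-1)^2$. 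Granting (i), for each bipartite cut $S|\bar{S}$ one may pick $u$ generic values $\alpha_i$ so that the extracted finite basis $\bar{\calB}=\{\ket{\Psi(\alpha_i)}\}_{i=1}^u$ has the spanning property on both sides of the cut; since $u\ge\dim\calH_S+\dim\calH_{\bar{S}}-1$ (this boils down to $d^{N-1}(d-1)\ge d(d-1)$, true for $N\ge2$), Lemma~\ref{glowny-lemat} rules out a product vector across that cut in $(\mathrm{span}\,V_1)^\perp=(\mathrm{span}\,\bar{\calB})^\perp$, and ranging over all cuts makes this orthocomplement a GES. Granting (ii), its dimension is $d^N-u=(d-1)^2$.

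Next I would set the exponent bookkeeping in place. By Eq.~(\ref{produktowy-rozpis}), $\bigotimes_{k=2}^N\ket{\psi_k(\alpha)}=\ket{v_{d^{N-1}}(\alpha)}$, so the coordinates of $\bigotimes_{k\in I}\ket{\psi_k(\alpha)}$ are the monomials $\alpha^{e}$ with $e=\sum_{k\in I\setminus\{1\}}a_kd^{N-k}$ (plus an extra summand $m\widetilde{d}$ present precisely when $1\in I$), $m,a_k\in\{0,\dots,d-1\}$. The numbers $\sum_{k\in I\setminus\{1\}}a_kd^{N-k}$ are exactly those elements of $\{0,\dots,d^{N-1}-1\}$ whose base-$d$ expansion is supported on the digit positions $\{N-k:k\in I\setminus\{1\}\}\subseteq\{0,\dots,N-2\}$, hence pairwise distinct by uniqueness of base-$d$ expansion. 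The single computation everything hinges on is that $\widetilde{d}=d^{N-1}-d+1$ has base-$d$ digits $(d-1,d-1,\dots,d-1,1)$: digit $1$ at position $0$ and digit $d-1$ at every position $1,\dots,N-2$.

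With this in hand, (i) is immediate when $1\notin I$. For $1\in I$ I would write $I=\{1\}\cup J$ with $J\subsetneq\{2,\dots,N\}$, suppose $m\widetilde{d}+e=m'\widetilde{d}+e'$ for two such $J$-supported $e,e'\in\{0,\dots,d^{N-1}-1\}$, and argue by cases. If $N\notin J$, every term of $e$ and $e'$ is divisible by $d$ whereas $\widetilde{d}\equiv1\pmod d$, so reducing mod $d$ forces $m\equiv m'$, hence $m=m'$ (both in $\{0,\dots,d-1\}$) and then $e=e'$. If $N\in J$, then $J$ omits some $k_0\in\{2,\dots,N-1\}$; assuming $N\ge3$ (for $N=2$ one has $J=\emptyset$ and the claim is trivial) and $m>m'$, the bound $e'-e=(m-m')\widetilde{d}\le d^{N-1}-1<2\widetilde{d}$ forces $m-m'=1$, so $e'=e+\widetilde{d}$ with $e\le d^{N-1}-1-\widetilde{d}=d-2$; adding $\widetilde{d}$ to the single-digit number $e$ triggers no carry, so $e'$ has digit $d-1$ at position $N-k_0\in\{1,\dots,N-2\}$, contradicting the fact that a $J$-supported number has digit $0$ there. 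I expect this to be the main obstacle: the honest coincidences that make $\mathrm{span}\,V_1$ a proper subspace of $\calH_{N,d}$ must be shown to disappear the instant one tensor factor is dropped, and it is exactly the digit profile of $\widetilde{d}$ that draws that line.

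Finally, for (ii) I would rerun the comparison for the full index set $\{1,\dots,N\}$: $m\widetilde{d}+j=m'\widetilde{d}+j'$ with $j,j'\in\{0,\dots,d^{N-1}-1\}$ again forces $|m-m'|\le1$ (for $N\ge3$), and each of the $d-1$ consecutive pairs $(m,m+1)$ contributes exactly $d^{N-1}-\widetilde{d}=d-1$ coinciding monomials (those with $j\in\{\widetilde{d},\dots,d^{N-1}-1\}$, paired with $j'=j-\widetilde{d}$). Checking that no monomial can be produced by three distinct pairs $(m,j)$ — which would need $d^{N-1}\le2d-3$, false for $N\ge3$ — shows these $(d-1)^2$ coincidences are pairwise disjoint, so $u=d^N-(d-1)^2$. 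For $N=2$, $\widetilde{d}=1$ and $u=|\{m+j:0\le m,j\le d-1\}|=2d-1=d^2-(d-1)^2$. In every case $\dim(\mathrm{span}\,V_1)^\perp=d^N-u=(d-1)^2$, which together with (i) and Lemma~\ref{glowny-lemat} completes the argument.
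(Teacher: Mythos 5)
Your proposal is correct and follows the same overall architecture as the paper's proof: reduce the GES claim to linear independence of the coordinate functions on every proper subset of parties (so that, by the genericity argument of Appendix \ref{App-A}, a finite basis with the spanning property exists and Lemma \ref{glowny-lemat} applies to every cut), and obtain the dimension by counting distinct monomials among the coordinates of the full vectors. Where you diverge is in how the distinctness of exponents on subsystems containing $A_1$ is verified. The paper's argument is a one-line interval observation: for any proper subset $T\subsetneq\{A_2,\dots,A_N\}$ the exponents occurring on $T$ all lie in $\{0,\dots,\widetilde{d}-1\}$ (the extreme case being $A_2\cdots A_{N-1}$, with maximum $\widetilde{d}-1$), so the shifts by $m\widetilde{d}$ place them in pairwise disjoint blocks and no collision is possible. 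You instead run a base-$d$ digit and carry analysis (reduction mod $d$ when $N\notin J$, the digit profile of $\widetilde{d}$ and a no-carry argument when $N\in J$), which is essentially the machinery the paper saves for Theorem \ref{N-kuditow}; it is correct but heavier than needed here, and it forces you to split off $N=2$ separately, which the interval argument handles uniformly. Your dimension count ($d^N$ exponent pairs minus $(d-1)^2$ pairwise-disjoint two-fold coincidences) is equivalent to the paper's observation that the exponents realized are exactly $0,1,\dots,d^N-(d-1)^2-1$; the paper's version is again slightly more economical, but both are sound.
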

\begin{proof}
We first prove that the subspace  is indeed genuinely multiparty entangled. With this aim, it is enough, as we argued above, to show linear independence of the  coordinates (as functions of $\alpha$) of the vectors for all bipartite cuts of the parties. 
Consider a bipartition  $S|\bar{S}$,  assuming w.l.o.g. that $\bar{S} \subseteq \textbf{A}\setminus A_1$. Write the vectors from $V_1$  with respect to such bipartition as:
\beqn \label{bipartycja}
\ket{\phi(\alpha)}_S \otimes \ket{\varphi(\alpha)}_{\bar{S}},
\eeqn
where $\ket{\phi(\alpha)}_S = \bigotimes_{A_i\in S} \ket{\psi_i(\alpha)}_{A_i}$ and $\ket{\varphi(\alpha)}_{\bar{S}} = \bigotimes_{A_i\in \bar{S}} \ket{\psi_i(\alpha)}_{A_i}$.
 As already observed (see also Appendix \ref{App-B}),  the  coordinates of  $\ket{\varphi(\alpha)}_{\bar{S}}$ (being monomials in $\alpha$) are linearly independent.
The same will now be proved for subsystem $S$. When $\bar{S}=\textbf{A}\setminus A_1$ the subsystem $S$ simply is $A_1$ and we trivially have the desired result. Consider now the case $\bar{S} \subset \textbf{A}\setminus A_1$. By construction, $\widetilde{d}$ is greater than  powers of $\alpha$ in the coordinates of any vector on  a subsystem of $A_2\dots A_{N}$, with the largest of these powers being $\widetilde{d}-1$ [corresponding to the $(N-2)$--partite subsystem $A_2\dots A_{N-1}$]. Clearly, due to this reason, after multiplying any such vector 
by $(1,\alpha^{\widetilde{d}},\dots,\alpha ^{(d-1)\widetilde{d}})$ on $A_1$ to obtain $\ket{\phi(\alpha)}_S$, there will be no repeating, 
{\it i.e.,} linearly dependent, monomials of $\alpha$ in the entries of the latter.  This concludes this part of the proof.

As to the dimension of the GES,  this is, as announced earlier, just a simple counting of linearly independent monomials  of $\alpha$ in the entries of the  vectors from the set $V_1$. Writing down explicitly the monomials being the coordinates of these vectors in the order of increasing powers may be useful with this aim:
\beqn
&& 1,\alpha,\dots,\alpha^{\widetilde{d}},\dots, \alpha^{d^{N-1}-1}, \dots, \alpha^{2\widetilde{d}},\dots, \alpha ^{\widetilde{d}+d^{N-1}-1},\dots, \non
&&\hspace{+1.5cm} \alpha^{3\widetilde{d}},\dots, \alpha^{2\widetilde{d}+d^{N-1}-1},\dots,\alpha^{d^N-(d-1)^2-1}.
\eeqn
Clearly, all the powers of $\alpha$ up to the value $(d-1)\widetilde{d}+(d-1)\sum_{k=2}^N d^{N-k}=d^N-(d-1)^2-1$ appear. In turn,  $\dim \spann \; V_1=d^N-(d-1)^2-1+1=d^N-(d-1)^2$ and the dimension of the GES is $d^N-\spann \; V_1$. This concludes the proof.

\end{proof}

The construction above is, as it could have been predicted, quite far from being optimal regarding the dimensionality it achieves and a significant improvement of the performance can be achieved.
With this respect, the next one not only recovers the dependence on the number of parties but gives, except the special case of $d = 2$, the ,,correct'' order, $\sim d^N$, of the leading term  in the dimension as well. It is given by the following theorem.
\begin{theorem}\label{N-kuditow}
Let $V_2$ be the following set of product vectors from 
$\calH_{N,d}$:
\beqn\label{N-kuditow-wektory}
V_2=\left\{\ket{\psi_1^{(2)}(\alpha)}_{A_1}\otimes\ket{\psi_2(\alpha)}_{A_2}\otimes\ldots\otimes\ket{\psi_N(\alpha)}_{A_N}\,|\,\alpha\in\mathbbm{C}\right\}, \non
\eeqn
where $\ket{\psi_k(\alpha)}$, $k=2,\ldots,N$, are defined in
Eq. (\ref{common}), while $\ket{\psi_1^{(2)}(\alpha)}$ is of the form  
\beq
\ket{\psi_1^{(2)}(\alpha)}=\big(1,\alpha^{p_1},\alpha^{p_2},\dots, \alpha ^{p_{d-1}}\big)
\eeq
with
\beq
p_i:= \sum_{k=2}^N i d^{N-k}, 
\eeq
$i=1,2,\dots, d-1$. Then, the subspace orthogonal to $\mathrm{span}(V_2)$ is a GES
of dimension  $d^N-(2d^{N-1}-1)$.
\end{theorem}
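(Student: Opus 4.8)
My plan is to follow the two-part template of the proof of Theorem~\ref{smaller-dim-ges}: first show that the orthocomplement of $\mathrm{span}(V_2)$ contains no biproduct vector by verifying the spanning property (Definition~\ref{def-span}) in every bipartite cut, and then compute its dimension by counting the distinct monomials of $\alpha$ among the coordinates of $\ket{\Psi(\alpha)}\equiv\ket{\psi_1^{(2)}(\alpha)}_{A_1}\otimes\bigotimes_{k=2}^{N}\ket{\psi_k(\alpha)}_{A_k}$. It is convenient to abbreviate $m:=\sum_{k=2}^{N}d^{N-k}=1+d+\dots+d^{N-2}=(d^{N-1}-1)/(d-1)$, so that $p_i=im$ and hence $\ket{\psi_1^{(2)}(\alpha)}=(1,\alpha^{m},\alpha^{2m},\dots,\alpha^{(d-1)m})$, while by Eq.~(\ref{produktowy-rozpis}) one has $\bigotimes_{k=2}^{N}\ket{\psi_k(\alpha)}=\ket{v_{d^{N-1}}(\alpha)}$ on $A_2\cdots A_N$. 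Consequently the exponents of $\alpha$ occurring in the coordinates of $\ket{\Psi(\alpha)}$ are exactly the integers $im+r$ with $0\le i\le d-1$ and $0\le r\le d^{N-1}-1$.

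For the dimension, I would argue as in Appendix~\ref{App-A} that $u:=\dim\mathrm{span}(V_2)$ equals the number of distinct such exponents. For fixed $i$ the set $\{im+r:0\le r\le d^{N-1}-1\}$ is the integer interval $[im,\,im+d^{N-1}-1]$; since $m\le d^{N-1}-1$ (with equality only when $d=2$), consecutive intervals overlap as $i$ runs through $0,\dots,d-1$, and because $(d-1)m=d^{N-1}-1$ their union is the single interval $[0,\,2d^{N-1}-2]$. Hence $u=2d^{N-1}-1$ and the dimension of the GES equals $d^{N}-(2d^{N-1}-1)$, as claimed.

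For the genuine entanglement, by Remark~\ref{remark} it is enough to check the spanning in every bipartite cut $S|\bar S$, and, by the reduction recorded in Appendix~\ref{App-B}, it suffices to treat the cuts with $|S|=N-1$, i.e.\ $S=\textbf{A}\setminus A_\ell$, $\ell\in\{2,\dots,N\}$ (the side $\bar S\subseteq\{A_2,\dots,A_N\}$ carries the monomials $\sum_{k\in\bar S}c_k d^{N-k}$, $c_k\in\{0,\dots,d-1\}$, which are pairwise distinct by uniqueness of base-$d$ expansions; smaller $S$ are covered since any subfamily of a family of pairwise distinct monomials is again such a family). So fix $\ell$. On $S=\textbf{A}\setminus A_\ell$ the relevant exponents are $im+e$ with $0\le i\le d-1$ and $e\in E_\ell$, where $E_\ell$ is the set of $(N-1)$-digit base-$d$ integers having digit $0$ in position $N-\ell$; in particular $E_\ell\subseteq\{0,\dots,d^{N-1}-1\}$. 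I would prove that $(i,e)\mapsto im+e$ is injective on $\{0,\dots,d-1\}\times E_\ell$. Suppose $im+e=i'm+e'$; we may assume $i>i'$ and put $\delta:=i-i'\ge1$, so $e'=e+\delta m$ with $e,e'\in E_\ell$. Since $m$ has all base-$d$ digits equal to $1$ and $\delta\le d-1$, the number $\delta m$ has all digits equal to $\delta$; as $e$ has digit $0$ in position $N-\ell$, the column of the addition $e+\delta m$ there reads $0+\delta+c$ with $c\in\{0,1\}$ the incoming carry, so the result's digit in that position is $(\delta+c)\bmod d$. This must vanish because $e'\in E_\ell$, forcing $c=1$ and $\delta=d-1$. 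But then that column produces an outgoing carry $1$, and since every higher column reads $e_j+(d-1)+1=e_j+d$, the carry $1$ propagates through all remaining higher positions and out of the top digit, giving $e'\ge d^{N-1}$ --- contradicting $e'\le d^{N-1}-1$. Hence $\delta=0$, so $i=i'$, $e=e'$, the coordinates on $S$ are linearly independent functions of $\alpha$, and Lemma~\ref{glowny-lemat} yields that the subspace orthogonal to $\mathrm{span}(V_2)$ is a GES.

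I expect the crux to be precisely this injectivity claim: crude interval or cardinality estimates are not sharp enough, because for some $\ell$ the blocks $\{im+e:e\in E_\ell\}$ already have overlapping convex hulls, so one genuinely has to use the ``digit $0$ in position $N-\ell$'' structure of $E_\ell$ and track the carries in the addition $e+\delta m$. The remaining ingredients --- the monomial count behind the dimension, and the distinctness of monomials on the $\bar S$ side --- are the same elementary bookkeeping already employed in the proof of Theorem~\ref{smaller-dim-ges}.
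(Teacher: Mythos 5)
Your proposal is correct and follows essentially the same route as the paper's proof: reduction to the $(N-1)$-party cuts $S=\textbf{A}\setminus A_\ell$, a base-$d$ digit/carry argument showing the exponents $im+e$ with $e\in E_\ell$ are pairwise distinct, and a count of distinct monomials giving $u=2d^{N-1}-1$. Your only addition is that you spell out, via the propagating-carry overflow $e+\delta m\ge d^{N-1}$, why the case $\delta=d-1$ is impossible --- a step the paper dismisses with ``clearly the latter solution is impossible.''
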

Clearly, $p_i=i p_1$, nevertheless, in view of upcoming Theorem \ref{produktowe-lepsze}, we prefer to keep the denotations in the lemma as stated.
\begin{proof}
First, we prove that the subspace is genuinely entangled. Again, we consider bipartitions $S|\bar{S}$ with $\bar{S} \subset \textbf{A}\setminus A_1$ (the case $\bar{S}=\textbf{A}\setminus A_1$ is, just as before, trivial) and  prove linear independence of the coordinates (as functions of $\alpha$) of the resulting local vectors on $S$, since the vectors on $\bar{S}$ are the same as previously.
 This time, however, we exploit the observation that with this aim it is only enough to consider the cases with $|S|=N-1$ as the result for all subsystems of such $S$ then follows (see Appendix \ref{App-B}). Stating this differently, we consider all bipartitions such that $S=\textbf{A}\setminus A_j$ with $j=2,3,\dots,N$.

We then define
\beq \label{without-two}
\ket{\widetilde{f}_j(\alpha)}_{\textbf{A}\setminus A_1 A_j}:= \bigotimes_{\substack { k=2 \\ k\ne j}}^{N} (1,\alpha ^{d^{N-k}},\alpha ^{2d^{N-k}},\dots,\alpha ^{(d-1)d^{N-k}})_{A_k}
\eeq
%
and verify linear independence of the coordinates of the following vectors
\beqn \label{produkt-ogolny}
\ket{\psi_1^{(2)}(\alpha)}_{A_1}\otimes \ket{\widetilde{f}_j(\alpha)}_{\textbf{A}\setminus A_1 A_j}, \quad j=2,\dots,N,
\eeqn 
%
by showing that all  monomials arising in (\ref{produkt-ogolny}) are different.

Each monomial in the entries of the vector on $\textbf{A}\setminus A_j$ can be represented as $\alpha^{\frak{g}+sp_1}$, $s=0,1,\dots,d-1$, with
\beqn
\frak{g}=\sum_{\substack { k=2 \\ k\ne j}}^N i_{N-k} d^{N-k}, \quad i_{N-k}=0,1,\dots, d-1.
\eeqn

If all the monomials were not unique, there would be a pair $(\frak{g},s)$ with a another related solution $(\frak{g}',s')$, \tzn,
\beq\label{warunek-rownosci}
\frak{g}+sp_1=\frak{g}'+s'p_1,
\eeq
where
\beq
\frak{g}'=\sum_{\substack { k=2 \\ k\ne j}}^N i_{N-k}' d^{N-k}, 
\eeq
with $ i_{N-k}'\in \{0,1,\dots, d-1\}$ and $s'\in \{0,1,\dots, d-1\}$. We can assume $s' \ge s$. The claim is that there is no nontrivial, that is different than the original unprimed one, solution to this.

The condition (\ref{warunek-rownosci}) translates into the statment that there  exist $\{i_{N-m}\}, \{i'_{N-k}\}$ with $m\in \{ 0,1,\dots, d-1\}$ such that
\beqn
\label{warunek-sumy}
\sum_{\substack { k=2 \\ k\ne j}}^N (i_{N-k}-i'_{N-k}) d^{N-k}       =\sum_{k=2}^N m d^{N-k}.
\eeqn
The form of the numbers involved suggests conducting the remaining analysis using the  representation of numbers in the base-$d$. 
Rewriting (\ref{warunek-sumy}) using such representation we have
\beqn
\begin{array}{cccccccc}
&(&i'_{N-2}&\dots  &0  &\dots& i'_1& i'_0)_d\\ +&(&m&\dots &m &\dots & m &m)_d \\ =&(&i_{N-2}&\dots & 0 &\dots &i_1& i_0)_d
\end{array}
\eeqn
%
with $0$'s on the $(j-1)$--th positions corresponding to terms $d^{n-j}$.
Regardless of the base, while adding two numbers the carry from the $k$--th position (counting from the right) at the ($k+1)$--th position is always $0$ or $1$. This implies that $m$ could only be equal to $0$ or $d-1$. Clearly the latter solution is impossible, while the former leads to the same solution. Thence, monomials in the entries are unique and in consequence linearly independent.

We now find the dimension of the GES.  Let us expand the part of the spanning vectors from $V_2$ on $\textbf{A}\setminus A_1$:
\beqn
&&\bigotimes_{k=2}^{N} (1,\alpha ^{d^{N-k}},\alpha ^{2d^{N-k}},\dots,\alpha ^{(d-1)d^{N-k}})_{A_k} =\non
&&\hspace{+0.1cm}(1,\alpha,\dots,\alpha^{p_1-1},\alpha^{p_1},\dots, \alpha^{p_2-1}, \alpha^{p_2},\dots,\alpha^{p_{d-1}-1},\alpha^{p_{d-1}}).\non
\eeqn
All of these monomials are linearly independent. Additional linearly independent terms stem from the multiplications by $\alpha^{p_i}$ in $\ket{\psi_1^{(2)}}$. It is easy to realize that each multiplication introduces $p_1$ new terms. In turn, there is a total of $d^{N-1}+(d-1)p_1=2d^{N-1}-1$ linearly independent monomials in (\ref{N-kuditow-wektory}), which proves the claimed dimension of the GES.
\end{proof}

The construction achieves the maximal dimension within the approach taking monomial coordinates. It can be easily seen if one realizes that the construction in fact is as follows. We start with the vectors (\ref{produktowy-rozpis}). Then, the coordinates on $A_1$ populate available monomials starting with the lowest powers of $\alpha$ so that we keep spanning on any subset. The fact that the monomials on $A_1$ have the smallest possible degrees ensures that the number of different monomials in the coordinates of (\ref{N-kuditow-wektory}) is the smallest possible, in turn giving a GES of the largest dimension. 

Smaller dimensions can be obtained by varying the powers of monomials in the vector on $A_1$. We discuss this in Section \ref{generalki}.

Concluding this subsection, we note that in the qubit case both constructions coincide and single out only one GME state, which is of the form:
\beq\label{single-gme}
\ket{\Psi}=\frac{1}{\sqrt{2}}\left(
\ket{0}\ket{1}^{\otimes (N-1)}-\ket{1}\ket{0}^{\otimes (N-1)}\right).
\eeq
With a local operation $-\uroj \sigma_y $ on site $A_1$ the state can be transformed into the GHZ state. 

\subsection{Polynomial coordinates of vectors}\label{polynomial-coordinates}

Clearly, assuming the coordinates to be monomials is not by any means a general approach. In principle, allowing the entries to be polynomials in $\alpha$ might increase the dimension of GESs.

The construction providing evidence that this is indeed the case is the content of the upcoming theorem. It is in fact inspired by the one given in Theorem \ref{N-kuditow} and may be considered its generalization. We have the following.
\begin{theorem}\label{produktowe-lepsze}
Let $V_3$ be the following set of product vectors from 
$\calH_{N,d}$:
\beqn\label{produkty-nowe-lepsze}
V_3=\left\{\ket{\psi_1^{(3)}(\alpha)}_{A_1}\otimes\ket{\psi_2(\alpha)}_{A_2}\otimes\ldots\otimes\ket{\psi_N(\alpha)}_{A_N}\,|\,\alpha\in\mathbbm{C}\right\}, \non
\eeqn
where $\ket{\psi_k(\alpha)}$, $k=2,\ldots,N$, are given by
Eq. (\ref{common}), and
\beq
\ket{\psi_1^{(3)}(\alpha)}=\big (1,P_1(\alpha),P_2(\alpha),\dots, P_{d-1}(\alpha)\big)
\eeq
with
\beq
P_i(\alpha):= \sum_{k=2}^N\alpha^{ i d^{N-k}}, 
\eeq
$i=1,2,\dots, d-1$. Then, the subspace orthogonal to $\mathrm{span}(V_3)$ is a GES
of dimension  $d^{N-2} (d-1)^2=d^N-2d^{N-1}+d^{N-2} $.
\end{theorem}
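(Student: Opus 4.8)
The plan is to follow the template of the proofs of Theorems~\ref{smaller-dim-ges} and \ref{N-kuditow}: first show that $(\spann V_3)^{\perp}$ is genuinely entangled by verifying the spanning property (Definition~\ref{def-span}) across every bipartite cut, and then compute $u=\dim\spann V_3$, so that the dimension of the GES is $d^N-u$. As there, the local vectors on $\textbf{A}\setminus A_1$ are the Vandermonde vectors $\ket{v_{d^{N-1}}(\alpha)}$ by Eq.~(\ref{produktowy-rozpis}), so the spanning on any $\bar{S}\subseteq\textbf{A}\setminus A_1$ holds automatically, and for cuts $S|\bar{S}$ with $A_1\in S$ it is enough, by the reduction of Appendix~\ref{App-B}, to treat $S=\textbf{A}\setminus A_j$ for $j=2,\dots,N$; that is, one must prove that the coordinates (as functions of $\alpha$) of $\ket{\psi_1^{(3)}(\alpha)}_{A_1}\otimes\ket{\widetilde{f}_j(\alpha)}_{\textbf{A}\setminus A_1 A_j}$, with $\ket{\widetilde{f}_j}$ as in Eq.~(\ref{without-two}), are linearly independent.

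Writing $P_0(\alpha):=1$, this amounts to: $\sum_{s=0}^{d-1}P_s(\alpha)Q_s(\alpha)=0$ implies every $Q_s\equiv 0$, where $Q_s$ is a polynomial supported on the exponents of the entries of $\ket{\widetilde{f}_j}$, i.e.\ on the base-$d$ numbers below $d^{N-1}$ whose digit in position $N-j$ vanishes. The key point is that $P_s(\alpha)=\sum_{m=0}^{N-2}\alpha^{sd^m}$ contains the term $\alpha^{sd^{N-j}}$: multiplying $Q_s$ by it sends every exponent of $Q_s$ to one whose digit in position $N-j$ is exactly $s$ (no carry, since $s\le d-1$). I would then project the identity onto the span of the monomials whose base-$d$ digit in position $N-j$ equals a prescribed value; since adding $sd^m$ with $m>N-j$ leaves that digit at $0$ while adding it with $m<N-j$ can raise it to at most $1$, for each $s\ge 2$ the class ``digit $=s$'' of the whole sum receives a contribution only from the $m=N-j$ term of $P_sQ_s$, whence $\alpha^{sd^{N-j}}Q_s=0$ and $Q_s\equiv 0$. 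The residual equation $Q_0+P_1Q_1=0$ is then disposed of by a leading-monomial argument: if $\alpha^e$ is the top monomial of $Q_1$, then in $P_1Q_1$ the monomial $\alpha^{e+d^{N-j}}$ arises only from the $m=N-j$ term (any other $m$ would require a monomial of $Q_1$ of degree $>e$ or an exponent outside the support of $Q_1$) and it does not occur in $Q_0$, forcing the leading coefficient of $Q_1$ to vanish; so $Q_1\equiv 0$ and hence $Q_0\equiv 0$. This establishes the spanning for all cuts, and Lemma~\ref{glowny-lemat}, applied through the $\calB\leftrightarrow\bar{\calB}$ correspondence of Appendix~\ref{App-A} (the cardinality bound $u\ge d^{|S|}+d^{|\bar{S}|}-1$ being easy to check), yields that $(\spann V_3)^{\perp}$ is a GES.

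For the dimension I would identify $\spann V_3$ with the span of the coordinate functions of $\ket{\Psi(\alpha)}=\ket{\psi_1^{(3)}(\alpha)}_{A_1}\otimes\ket{v_{d^{N-1}}(\alpha)}_{A_2\cdots A_N}$ and show it equals $\spann\{1,\alpha,\dots,\alpha^{2d^{N-1}-d^{N-2}-1}\}$. One inclusion is immediate, as every entry of $\ket{\Psi(\alpha)}$ is $P_s(\alpha)\alpha^g$ with $0\le g\le d^{N-1}-1$ and $0\le s\le d-1$, of degree at most $(d-1)d^{N-2}+d^{N-1}-1=2d^{N-1}-d^{N-2}-1$, so $u\le 2d^{N-1}-d^{N-2}$. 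For the converse I would argue by strong induction on $e$: for $0\le e\le d^{N-1}-1$ the monomial $\alpha^e$ is itself the entry with $s=0$; for $d^{N-1}\le e\le 2d^{N-1}-d^{N-2}-1$ set $g:=e-(d-1)d^{N-2}$, which lies in $[d^{N-2},d^{N-1}-1]$, and observe that $P_{d-1}(\alpha)\alpha^g=\alpha^e+\sum_{m=0}^{N-3}\alpha^{(d-1)d^m+g}$ with every exponent in the sum strictly below $e$, hence in the span by the inductive hypothesis, so $\alpha^e$ is too. Thus $u=2d^{N-1}-d^{N-2}$ and $\dim(\spann V_3)^{\perp}=d^N-u=d^N-2d^{N-1}+d^{N-2}=d^{N-2}(d-1)^2$. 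I expect the genuine-entanglement part to be the delicate step, specifically the base-$d$ carry bookkeeping: one must make sure that adding $sd^m$ to a base-$d$ string with a forced zero in position $N-j$ cannot place the value $s$ there unless $m=N-j$, the sole exception being $s=1$ produced by a carry chain, which is precisely what necessitates the separate leading-monomial treatment of $Q_0+P_1Q_1=0$; by contrast, once the span is identified, the dimension count is routine.
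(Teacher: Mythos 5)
Your proposal is correct, and its overall architecture --- reduction to the cuts $S=\textbf{A}\setminus A_j$ via the observation of Appendix~\ref{App-B}, linear independence of the coordinate functions on those cuts, the passage from $\calB$ to $\bar{\calB}$ via Appendix~\ref{App-A}, and finally a count of $u=\dim\spann V_3$ --- matches the paper's. The one place where you genuinely depart from the paper is the linear-independence step. The paper orders the coordinate polynomials of $\ket{\psi_1^{(3)}(\alpha)}\otimes\ket{\widetilde{f}_j(\alpha)}$ into ``groups'' and ``gaps'' and exhibits, for each polynomial $g_x^{j,y}P_z$, a single monomial $\alpha^{\frak{g}+zd^{N-j}}$ lying in a gap that no earlier polynomial in the list can produce (uniqueness being proved by showing that any competing representation forces its group element into a gap). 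You instead partition all monomials by their base-$d$ digit at position $N-j$: since each $Q_s$ is supported on digit-$0$ exponents, the class ``digit $=s$'' for $s\ge 2$ receives contributions only from $\alpha^{sd^{N-j}}Q_s$ (a carry can raise that digit to at most $1$, and the $m\neq N-j$ terms of the other $P_{s'}Q_{s'}$ land in classes $0$, $1$, or $s'$), which kills all $Q_s$ with $s\ge 2$ at once; the residual equation $Q_0+P_1Q_1=0$, where carries genuinely interfere with the digit-$1$ class, is then disposed of by your leading-monomial argument, and that argument is sound (for $m>N-j$ the candidate exponent $e+d^{N-j}-d^m$ retains digit $1$ at position $N-j$ and so lies outside the support of $Q_1$). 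Both proofs rest on the same combinatorial fact --- adding $sd^{N-j}$ to a digit-$0$ exponent yields digit exactly $s$ with no carry --- but your digit-class decomposition is arguably cleaner and isolates precisely where the carry bookkeeping bites ($s=1$), which the paper's ordering argument absorbs implicitly. Your dimension count reaches the same $u=2d^{N-1}-d^{N-2}$ as the paper and in fact supplies the explicit induction that justifies the paper's terser claim that the terms $\alpha^{m}P_k$ with $0\le m\le d^{N-1}-d^{N-2}-1$, and only those, are linearly dependent on the preceding ones.
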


Direct comparison shows that such constructed GESs have $d^{N-2}-1$ more elements than the ones from Theorem \ref{N-kuditow}.
\begin{proof}
We begin with the proof that the subspace is genuinely entangled. Following the same line of thought as in the proof of Theorem \ref{N-kuditow}, we only need to consider  bipartitions $S|\bar{S}$ with $S=\textbf{A}\setminus A_j$, $j=2,3,\dots,N$ as for for the remaining cases the result follows.

Denote 
\beq
\ket{f(\alpha)}_{\textbf{A}\setminus A_1 }:= \bigotimes_{ k=2 }^{N} (1,\alpha ^{d^{N-k}},\alpha ^{2d^{N-k}},\dots,\alpha ^{(d-1)d^{N-k}})_{A_k},
\eeq
and consider again $\ket{\widetilde{f}_j(\alpha)}_{\textbf{A}\setminus A_1 A_j}$
defined in Eq. (\ref{without-two}).
%
%
Similarly to the proof of Theorem \ref{N-kuditow}, we prove linear independence of the functions, here polynomials in $\alpha$, being the coordinates of the vectors 
\beqn
\ket{\psi_1^{(3)}(\alpha)}_{A_1}\otimes \ket{\widetilde{f}_j(\alpha)}_{\textbf{A}\setminus A_1 A_j}, \quad j=2,\dots,N,
\eeqn
which is sufficient to support the claim. 

We begin with some preparatory terminology.
Let
\beqn
g_m^j
&:=&\left(\alpha ^{(m-1)d^{N-j+1}},\dots,     \alpha ^{(m-1)d^{N-j+1}+d^{N-j}-1}     \right)
 \nonumber\\
&=&
\left(\alpha ^{md^{N-j+1}-d^{N-j+1}},\dots,     \alpha ^{md^{N-j+1}-(d-1)d^{N-j}-1}     \right),\non 
\eeqn
for $m=1,2,\dots,d^{j-2}$. We will refer to $g_m^j$'s as the ,,groups''. In terms of the groups, we can, with a little abuse of mathematical notation, write:
\beqn
 \ket{\widetilde{f}_j(\alpha)}_{\textbf{A}\setminus A_1 A_j}=(g_1^j,g_2^j,\dots, g_{d^{j-2}}^j)= \bigoplus_{m=1}^{d^{j-2}} g_m^j.
\eeqn
As one can see each $g_m^j$ has $d^{N-j}$ elements.

The ,,gaps'' are by definition the following $(d-1)d^{N-j}$ element sets
\beqn
\bar{g}_m^j
&=&\left(\alpha ^{(m-1)d^{N-j+1}+d^{N-j}},\dots,     \alpha ^{md^{N-j+1}-1}     \right)
\nonumber \\\label{gaps}
&=&
\left(\alpha ^{md^{N-j+1}-(d-1)d^{N-j}},\dots,     \alpha ^{md^{N-j+1}-1}   \right)
\eeqn
for $ m=1,2,\dots,d^{j-2}$. The gaps
represent the monomials missing in  $\ket{\widetilde{f}(\alpha)}_{\textbf{A}\setminus A_1 A_j}$ due to the omission of the $j$--th party  in $\ket{f(\alpha)}_{\textbf{A}\setminus A_1}$.

With these denotations we have (omitting subscripts denoting parties)\footnote{It may be of use to note that:
\beq
\bar{g}_{m+1}^j=\alpha^{d^{N-j+1}}\bar{g}_m^j,
\quad 
g_{m+1}^j=\alpha^{d^{N-j+1}}g_m^j
\eeq} :
\beqn
\ket{f(\alpha)}=(1,\alpha,\dots,\alpha^{d^{N-1}-1})&=&( g_1^j,\bar{g}_1^j,\dots, g_{d^{j-2}}^j,\bar{g}_{d^{j-2}}^j)\non
&=& \bigoplus_{m=1}^{d^{j-2}} (g_m^j,\bar{g}_m^j).
\eeqn

We now reshuffle the entries  of $\ket{\psi_1^{(3)}(\alpha)}_{A_1}\otimes \ket{\widetilde{f}_j(\alpha)}_{\textbf{A}\setminus A_1 A_j}$  (we only care about linear independence of the entries so such reordering is allowable) to obtain the following order:
\beqn\label{reshuffled}
&&\hspace{-0.5cm}\ket{\psi_1^{(3)}(\alpha)}_{A_1}\otimes \ket{\widetilde{f}_j(\alpha)}_{\textbf{A}\setminus A_1 A_j}\longrightarrow\non &&(g_1^j,\dots, g_{d^{j-2}}^j\dots, g_m^j P_1(\alpha),\dots, g_m^j P_{d-1}(\alpha),\non  && \hspace{+2cm} g_{m+1}^j P_1(\alpha),\dots, g_{m+1}^j P_{d-1}(\alpha)\dots).
\eeqn
This puts the polynomials $g_x^{j,y} P_z$ ($g_x^{j,y}$ is the $y$--th element of the $x$--th group) in the order of increasing powers of monomials  $\alpha^{\frak{g}} \alpha^{s d^{N-j}}$, $\frak{g}\in g_m$, $s=1,2,\dots,d-1$.

We will now argue that each such polynomial is a sum of monomials of which at least one does not appear in the preceding polynomials and monomials from $g_k^j$, $k=1,2,\dots, d^{j-2}$, in (\ref{reshuffled}). As such, this will prove the required linear independence .
The unique (in the above sense) monomials are actually the ones according to which we have reordered the list of  terms in the above equation. Clearly,   they belong to the gaps introduced in (\ref{gaps}) as they must if the reasoning put forward above is to be applied.

Let $\alpha^{\bar{\frak{g}}}$ be the $(r+1)$--th element of the $m$--th gap, \tzn, 
\beqn\label{brakujacy-rozpis}
&&\bar{\frak{g}}=md^{N-j+1}-(d-1)d^{N-j}+r
\eeqn
with $r=0,1,\dots, (d-1)d^{N-j}-1$.
We will now show that such element can be obtained 
through 
\beqn\label{rozpis-brakujacego}
\alpha^{\bar{\frak{g}}}=\alpha ^{\frak{g}} \alpha ^{s d^{N-j}}
\eeqn
with
\beqn\label{wartosci}
\alpha^{\frak{g}}\in g_m,\quad s=\left\lceil\frac{r+1}{d^{N-j}}\right\rceil.
\eeqn
From (\ref{brakujacy-rozpis}-\ref{rozpis-brakujacego}) we have
\beqn
\frak{g}&=& md ^{N-j+1}-(d-1)d^{N-j} +r -sd^{N-j}     \non
        &=& (m-1)d^{N-j+1} +r -(s-1)d^{N-j},
\eeqn
with $r=0,1,\dots,(d-1)d^{N-j}$.
Let $r=xd^{N-j}+y$, $x=0,1,\dots,d-2$ and $y=0,1,\dots, d^{N-j}-1$, then substituting the value for $s$ from (\ref{wartosci})
\beqn \label{rozpis-g}
\frak{g}=(m-1)d^{N-j+1}+y,
\eeqn
$ y=0,1,\dots, d^{N-j}-1$. which proves the decomposition (\ref{rozpis-brakujacego}). In fact, the given $s$ is unique for $j$ and any $\frak{g}\in g_m$.

Now the claim is that the triple  $(\frak{g},s,j)$ is the unique solution in the meaning introduced above. Let us find other solutions $(\frak{g}',s',j')$
with some $s'\in \{1,2,\dots,d-1\}$, $j' \in \{ 2,\dots, n \}$.
The core of the method is that we only need to care about solutions pointing us to the  polynomials $g_{x}^{y} P_z$, which are to the left [in the sequence (\ref{reshuffled})] of the polynomial under consideration (\tzn, the one for which $\bar{\frak{g}}=\frak{g}+sd^{N-j}$), that is triples such that $\frak{g}'$ belong to a group and 
\beq\label{to-the-left}
\frak{g}'+s'd^{N-j} < \frak{g}+sd^{N-j}.
\eeq

The condition that $(\frak{g}',s',j')$ is another solution giving rise to $\bar{\frak{g}}$ rewrites to:
\beq\label{g-prim}
\frak{g}'=\frak{g}+sd^{N-j}-s' d^{N-j'}.
\eeq
(It is clear that $\frak{g}'=\frak{g} $ iff $s'=s$ and $j'=j$.) With the aid of (\ref{to-the-left}) we can thus narrow our considerations down to the case
\beq \label{obszar-j}
j > j'.
\eeq

Rewriting (\ref{g-prim}) using (\ref{rozpis-g}) we obtain
\beqn
\frak{g}' &=& (m-1)d^{N-j+1}+y     +sd^{N-j}-s' d^{N-j'}\non
&=& (m-s' d^{j-j'-1})d^{N-j+1}-(d-1)d^{N-j}+\non  && \hspace{+3.65cm}+(s-1)d^{N-j}+y, 
\eeqn
where $y=0,1,\dots,d^{N-j}-1$. Taking into account (\ref{obszar-j}) and comparing the above with (\ref{gaps}) we deduce that  $\alpha^{\frak{g}'}$, corresponding to the alleged solution, belongs to a gap [possibly an ,,nonexistent'' one corresponding to $m\le 0$ in (\ref{gaps})]. This is a contradiction with the assumption that it is an element of a group. 

We perform such analysis for all elements from the gaps. We conclude that each $g_{x}^{y} P_z$ in (\ref{reshuffled}) contains a monomial absent in the polynomials to the left of the one under scrutiny. In turn, all elements of (\ref{reshuffled}) are linearly independent  functions. This ends this part of the proof.

Proving the dimensionality of the GES is much less involved. With this aim we need to find the number of linearly dependent polynomials in the entries of (\ref{produkty-nowe-lepsze}). For this count it may be useful to write down explicitly the  vectors from $V_3$
 \beqn
&&\hspace{-1.2cm} (1,\alpha,\dots,\alpha^{d^{N-1}-1},P_1,\alpha P_1,\dots,\alpha^{d^{N-1}-1} P_1,\dots,\non && \hspace{+1.5cm} P_{d-1},\alpha P_{d-1},\dots \alpha^{d^{N-1}-1} P_{d-1}).
 \eeqn
 Since a polynomial $P_k(\alpha)$, $k\ge 1$, is of degree $k d^{N-2}$, the terms, and only those ones, $\alpha ^m  P_{k}(\alpha)$ for $0\le m \le d^{N-1}-d^{N-2}-1$ are linearly dependent on the preceding ones. In turn, there are $(d-1)(d^{N-1}-d^{N-2})$ such linearly dependent terms in (\ref{produkty-nowe-lepsze}). This is exactly the claimed dimension of the GES as there is a total of $d^N$ entries.
\end{proof}

For convenience we have collected the constructions with the dimensions they achieve in Table \ref{tabela-upb}.

\begin{table*}[h!t]
\begin{tabular}{l|l||l}\toprule
    nUPB & $\ket{\psi_1^{(m)}(\alpha)}$ & $\dim \mathrm{GES}$ \\ \hline
    $V_1$ (Theorem \ref{smaller-dim-ges})  & $\big(1,\alpha^{\widetilde{d}},\alpha^{2\widetilde{d}},\dots,\alpha^{(d-1)\widetilde{d}}\big)$, \qquad\qquad
$\widetilde{d}:=\sum_{k=2}^{N-1}(d-1)d^{N-k}+1$   &   $(d-1)^2$  \\
    $V_2$ (Theorem \ref{N-kuditow}) & $\big(1,\alpha^{p_1},\alpha^{p_2},\dots, \alpha ^{p_{d-1}}\big)$, \qquad\qquad  $p_i:= \sum_{k=2}^N i d^{N-k}$  &   $d^N-(2d^{N-1}-1)$\\ 
     $V_3$ (Theorem \ref{produktowe-lepsze}) & $\big (1,P_1(\alpha),P_2(\alpha),\dots, P_{d-1}(\alpha)\big)$,\quad $P_i(\alpha):= \sum_{k=2}^N\alpha^{ i d^{N-k}}$ &   $d^{N-2}(d-1)^2$\\ 
\end{tabular}
\caption{A collection of non--orthogonal unextendible product bases considered in the paper along with the dimensions of the  genuinely entangled subspaces arising from them. In each case the nUPB $V_m$ is given by $V_m=\{\ket{\psi_1^{(m)}(\alpha)}_{A_1}\otimes\ket{\psi_2(\alpha)}_{A_2}\otimes\ldots\otimes\ket{\psi_N(\alpha)}_{A_N}\,|\,\alpha\in\mathbbm{C}\}$ with $\ket{\psi_k(\alpha)}=(1,\alpha ^{d^{N-k}},\alpha ^{2d^{N-k}},\dots,\alpha ^{(d-1)d^{N-k}})$. The nUPBs differ by the form of  $\ket{\psi_1^{(m)}(\alpha)}$. }
\label{tabela-upb}
\end{table*}

Fig. \ref{porownanie-wymiarow-mniejsze-ges} displays the performance of different constructions of GESs as a function of the local dimension $d$ for $N=3$ parties and the number of parties $N$ for $d=3$.

\begin{figure}[h!]
\includegraphics[height=5cm,width=9cm]{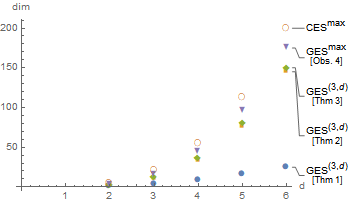}\hfill
\includegraphics[height=5cm,width=9cm]{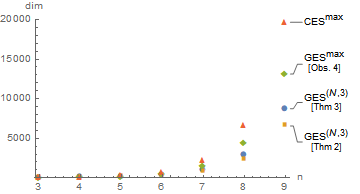}
\caption{Comparison in $d$ of the dimensions of GESs in $(\cee{d})^{\otimes 3}$ constructed with different methods (upper) and in the number of parties $N$ in the qutrit case (bottom). In both plots we have also put a plot of the dimension of a CES  of maximal dimension ($\mathrm{CES}^{\mathrm{max}}$) as a reference.}\label{porownanie-wymiarow-mniejsze-ges}
\end{figure}

It turns out that the choice of the polynomials on $A_1$ is not unique if one wants to obtain a GES but it is optimal as far as the dimension is concerned. We have the following concerning the latter (for the former see Section \ref{generalki}).
\begin{theorem}\label{optymalnosc}
Let $\calV$ be a subspace spanned by the following vectors:
\beqn\label{produkty-optymalne}
\left\{\ket{\varphi(\alpha)}_{A_1}\otimes\ket{\psi_2(\alpha)}_{A_2}\otimes\ldots\otimes\ket{\psi_N(\alpha)}_{A_N}\,|\,\alpha\in\mathbbm{C}\right\}, \non
\eeqn
with $\ket{\psi_k(\alpha)}$, $k=2,\ldots,N$ given by
Eq. (\ref{common}), and
\beq
\ket{\varphi(\alpha)}=\big(Q_0(\alpha),Q_1(\alpha),Q_2(\alpha),\dots,Q_{d-1}(\alpha)\big),
\eeq
where $Q_i(\alpha)$, $i=0,1,\dots,d-1$, are polynomials ordered in the order of the nondecreasing degrees. Then,  if the subspace orthogonal to $\calV$ is a GES, its dimension is no larger than $d^{N-2}(d-1)^2$.
\end{theorem}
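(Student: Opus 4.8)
The plan is to show that any construction of the type described, with $\ket{\varphi(\alpha)}$ having polynomial coordinates $Q_0,\dots,Q_{d-1}$ ordered by nondecreasing degree, cannot beat the dimension $d^{N-2}(d-1)^2$ once we insist that the orthocomplement is a GES. The dimension of the GES equals $d^N - \dim\spann\calV$, so the claim is equivalent to the lower bound $\dim\spann\calV \ge d^N - d^{N-2}(d-1)^2 = 2d^{N-1}-d^{N-2}$. So the real content is: \emph{if the orthocomplement is a GES, then $\spann\calV$ is at least this large.}

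First I would fix the bipartition $S=\textbf{A}\setminus A_2 \,|\, \bar S = A_2$ (any single party other than $A_1$ works equally well), and recall from Lemma~\ref{glowny-lemat} and Remark~\ref{remark} that the GES property forces the spanning to hold on $S$ across this cut. Writing the vectors in this cut, on $\bar S=A_2$ the local vector is $\ket{\psi_2(\alpha)}=(1,\alpha^{d^{N-2}},\dots,\alpha^{(d-1)d^{N-2}})$, and on $S$ the local vector is $\ket{\varphi(\alpha)}_{A_1}\otimes\bigotimes_{k=3}^N\ket{\psi_k(\alpha)}$. The key combinatorial observation is that the $d^{N-2}$-dimensional factor $\bigotimes_{k=3}^N\ket{\psi_k(\alpha)} = (1,\alpha,\dots,\alpha^{d^{N-2}-1})$ is exactly a Vandermonde vector $\ket{v_{d^{N-2}}(\alpha)}$, so the vectors on $S$ span a space whose dimension counts distinct monomial degrees in the entries of $\ket{\varphi(\alpha)}\otimes\ket{v_{d^{N-2}}(\alpha)}$. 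I would then argue that for the spanning on $S$ to hold, the degrees appearing in $Q_0,\dots,Q_{d-1}$ must be spread out enough that, after tensoring with $\{0,1,\dots,d^{N-2}-1\}$, at least $d\cdot d^{N-2}-$ (something controlled) distinct degrees appear — in fact the cleanest route is: spanning on $S$ requires that $\ket{\varphi(\alpha)}$ itself, together with arbitrary $d^{N-2}$-fold "shifts," fills a $d\,d^{N-2}$-dimensional space, which already gives $\dim\spann_{S} \ge d^{N-1}$. That alone is not enough; the extra $d^{N-1}-d^{N-2}$ must come from the $A_1$ polynomials having genuinely high degree.

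The heart of the argument is a degree bound on the $Q_i$. I would show that if $Q_i$ has degree $\delta_i$ with $\delta_0\le\delta_1\le\dots\le\delta_{d-1}$, then spanning on $S$ (for the cut isolating $A_2$) forces $\delta_i \ge i\,d^{N-2}$ for every $i$ — intuitively, the entries of $\ket{\varphi(\alpha)}\otimes\ket{v_{d^{N-2}}(\alpha)}$ coming from $Q_0,\dots,Q_{i-1}$ only reach degree $\delta_{i-1}+d^{N-2}-1$, so if $\delta_i$ were too small there would be a nontrivial linear relation among the $d^{N-2}(i+1)$ relevant entries, defeating spanning. Granting $\delta_i\ge i\,d^{N-2}$, the total number of distinct monomial degrees appearing in all entries of a vector in $\calV$ — hence $\dim\spann\calV$ — is at least the number of integers covered by $\bigcup_{i=0}^{d-1}[\delta_i,\ \delta_i+d^{N-1}-1]$, which, using $\delta_i\ge i d^{N-2}$ and that these intervals have length $d^{N-1}$ while being spaced by at most... here one computes: consecutive intervals starting at $i d^{N-2}$ and $(i+1)d^{N-2}$ overlap, and the union over $i=0,\dots,d-1$ covers $[0,\ (d-1)d^{N-2}+d^{N-1}-1]$, i.e. $d^{N-1}+(d-1)d^{N-2} = 2d^{N-1}-d^{N-2}$ integers. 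Therefore $\dim\spann\calV \ge 2d^{N-1}-d^{N-2}$, giving the GES dimension bound $d^N-(2d^{N-1}-d^{N-2}) = d^{N-2}(d-1)^2$.

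The main obstacle I anticipate is making the degree bound $\delta_i \ge i\,d^{N-2}$ fully rigorous: one must be careful that the $Q_i$ are not monomials but arbitrary polynomials, so "degree" arguments about the top term do not immediately control linear independence of the whole collection of entries; one really needs that the lowest-degree behavior and the structured interleaving with the $\ket{v_{d^{N-2}}(\alpha)}$ factor together force enough spread. The clean way is to induct on $i$: assume the entries generated by $Q_0,\dots,Q_{i-1}$ span a space of the right dimension, observe that $Q_i$ together with its shifts must contribute $d^{N-2}$ genuinely new functions for spanning on $S$ to survive when we pass to the cut $A_2|\text{rest}$ (otherwise a vector orthogonal to all of $\calV$ of product form across $A_2|\text{rest}$ can be manufactured, contradicting the GES property via Lemma~\ref{glowny-lemat}), and the smallest degree at which $Q_i$ can do this is $\delta_{i-1}+d^{N-2}$, which by induction is $\ge i\,d^{N-2}$. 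A secondary technical point is handling the ordering hypothesis correctly — since the $Q_i$ are only assumed ordered by nondecreasing degree, one should check the argument does not secretly need distinct degrees, and if it does, treat ties by a small perturbation or a direct rank argument. Modulo these care points, the counting of the covered degree-interval union is routine.
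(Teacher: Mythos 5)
Your overall strategy coincides with the paper's up to the halfway point: reduce the claim to the lower bound $\dim\spann\,\calV\ge 2d^{N-1}-d^{N-2}$, and obtain the first $d^{N-1}$ linearly independent coordinate functions $Q_i(\alpha)\alpha^j$, $j=0,\dots,d^{N-2}-1$, from the fact that a GES in the orthocomplement forces the local vectors on $\textbf{A}\setminus A_2$ to span $\cee{d^{N-1}}$. (A minor caveat: the GES hypothesis gives you that the \emph{whole family} spans, not the tuple-wise ``spanning property'' of Definition \ref{def-span}; but the former is all you actually use, so this is harmless.)

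The gap is in what you call the heart of the argument. The degree bound you need --- $\delta_i\ge\delta_{i-1}+d^{N-2}$, hence $\delta_i\ge i\,d^{N-2}$ with the starting points genuinely spread out, hence $\bigl|\bigcup_i[\delta_i,\delta_i+d^{N-1}-1]\bigr|\ge 2d^{N-1}-d^{N-2}$ --- is false: nothing in the GES hypothesis prevents all the $Q_i$ from having the \emph{same} degree. Take $d=2$, $N=3$, $Q_0=1+\alpha^5$, $Q_1=\alpha^2+\alpha^5$. One checks directly that the coordinate functions are linearly independent on every cut containing $A_1$ (on $A_1A_3$ they are $1+\alpha^5,\ \alpha+\alpha^6,\ \alpha^2+\alpha^5,\ \alpha^3+\alpha^6$; on $A_1A_2$ they are $1+\alpha^5,\ \alpha^2+\alpha^7,\ \alpha^2+\alpha^5,\ \alpha^4+\alpha^7$), yet $\delta_0=\delta_1=5$, the two degree intervals coincide, and your union has only $d^{N-1}=4$ elements instead of the required $6$ (the actual $\dim\spann\,\calV$ here is $8$, so the theorem's conclusion holds but your count fails to certify it). Your inductive fallback --- ``the smallest degree at which $Q_i$ can contribute $d^{N-2}$ new functions is $\delta_{i-1}+d^{N-2}$'' --- is exactly what this example refutes: $Q_1$ contributes its new functions through its low-degree part, not its top term. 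The paper sidesteps any control over the individual $\delta_i$: having secured the $d^{N-1}$ independent entries $Q_i\alpha^j$ with $j<d^{N-2}$, it adjoins only the $(d-1)d^{N-2}$ entries $Q_{d-1}\alpha^{md^{N-2}+i}$ with $m\ge1$, $0\le i<d^{N-2}$, whose degrees are pairwise distinct and strictly exceed every degree occurring in the first family because $Q_{d-1}$ has maximal degree among the $Q_i$ (the only place the nondecreasing-degree ordering is used). That single observation yields $2d^{N-1}-d^{N-2}$ independent entries and closes the proof; your interval-union count cannot be repaired without some such replacement.
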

\begin{proof}

By assumption, the coordinates of the vectors 
\beqn
&&\hspace{-1cm}\ket{\varphi(\alpha)}_{A_1}\otimes\ket{\psi_3(\alpha)}_{A_3}\otimes\ldots\otimes\ket{\psi_N(\alpha)}_{A_N}=\non
&&\hspace{-0.5cm}(Q_0(\alpha),Q_1(\alpha),Q_2(\alpha),\dots,Q_{d-1}(\alpha))_{A_1}\otimes \non
&& \bigotimes_{k=3}^{N} (1,\alpha ^{d^{N-k}},\alpha ^{2d^{N-k}},\dots,\alpha ^{(d-1)d^{N-k}})_{A_k}=\non 
&&\hspace{-0.5cm}(Q_0(\alpha),Q_1(\alpha),Q_2(\alpha),\dots,Q_{d-1}(\alpha))_{A_1} \otimes \non && (1,\alpha,\alpha^2,\dots,\alpha ^{d^{N-2}-1})_{A_3 A_2\dots A_N}
, \quad \alpha \in \mathbb{C},
\eeqn
which are  polynomials of the form $Q_i \alpha ^{j}$, $i=0,1,\dots,d-1$, $j=0,1,\dots,d^{N-2}-1$, are linearly independent  functions of $\alpha$. Now, in the coordinates of the vectors (\ref{produkty-optymalne}) there appear, among other, polynomials $\alpha^{m d^{N-2}} \alpha^i Q_{d-1}$, $m=1,2,\dots, d-1$, $i=0,1,\dots,d^{N-2}-1$, the degrees of which are larger than the degrees of any of the terms mentioned above. In turn, there are at least $d^{N-2} d+ (d-1)d^{N-2}$  linearly independent polynomials in the entries of (\ref{produkty-optymalne}), which gives an upper bound on the dimension of a GES: $d^N-(2d-1)d^{N-2}=d^{N-2}(d-1)^2$. 
\end{proof}
\subsection{Qubit GES's}\label{qubit-geses}
As an illustration for the construction considered above we give an explicit form of the GESs in the multiqubit case. 

For $d=2$ the vectors constituting the nUPB from Theorem \ref{produktowe-lepsze} are
\beqn\label{wektory-najlepsze-kubity}
&&\ket{0}_{A_1}\sum_{j=0}^{2^{N-1}-1}\alpha^{j}\ket{(j)_2}_{A_2\dots A_N}+\non
&&\hspace{+1.5cm}+\ket{1}_{A_1} \sum_{j=0}^{2^{N-1}-1}\sum_{k=2}^{N}\alpha^{2^{N-k}+j}\ket{(j)_2}_{A_2\dots A_N},
\eeqn
where $(\cdot)_2$ denotes the $(N-1)$ digit binary representation of a number.

By direct computation one can verify that the following set of (unnormalized) non-orthogonal vectors spans the corresponding GES:
\beqn \label{wektory-w-ges}
\ket{0}_{A_1}\sum_{k=2}^{N}\ket{(2^{N-k}+j)_2}_{A_2\dots A_N}&-&\ket{1}_{A_1}\ket{(j)_2}_{A_2\dots A_N},\nonumber\\
\eeqn
with $j=0,1,\dots,2^{N-2}-1$.
For instance, for three parties we obtain
\beqn
\ket{001}+\ket{010}-\ket{100},\quad \ket{010}+\ket{011}-\ket{101}.
\eeqn 
Observe that this $2$--dimensional GES can be completed to a maximal one, \tzn, of dimension $3$ [cf. Eq. (\ref{max-dim-d})] by adding to it a GHZ state with the relative phase changed to $\pi$, that is:
\beqn
\ket{000}-\ket{111}.
\eeqn
That such a subspace is indeed genuinely entangled can be verified in several ways. One is to consider an arbitrary superposition in the subspace and consider all bipartite cuts of the resulting state.
Interestingly, adding the GHZ state with the plus sign [Eq. (\ref{ghz})] does not lead to a GES.

How such completion should be done to achieve the maximal dimension (\ref{max-dim-d}) for any of the constructions given here remains open at this point and needs further treatment. Encouraged by the findings reported above we express the hope that it can be done systematically in an efficient way. 

\subsection{Generalizations}\label{generalki}
We discuss here several generalizations of the presented constructions, which were already announced in the preceding parts of the paper.

First, let us focus on the case of monomial coordinates of the vectors. Assume the space to be $\cee{3}\otimes\cee{3}\otimes\cee{3} $ and consider the following families of vectors ($\alpha \in \mathbb{C}$)
\beqn
(1,\alpha^m,\alpha^n)\otimes (1,\alpha^3,\alpha^6)\otimes (1,\alpha,\alpha^2),
\eeqn
with $m,n \in \mathbb{N}$. By properly varying $m,n$ we can achieve any dimension lower than the optimal one attainable within the monomial construction, which is equal in this case to $10$ (see Theorem \ref{N-kuditow}). For example, for $m=9$ and $n=17$ we obtain a GES composed of a single GME vector.

We now move, also for three qutrits, to the case of different choices of polynomials in Theorem \ref{produktowe-lepsze}.
Let there be given the following families of vectors ($\alpha \in \mathbb{C}$)
\beqn
(1,\alpha+\alpha^p,\alpha^2+\alpha^q)\otimes (1,\alpha^3,\alpha^6)\otimes (1,\alpha,\alpha^2),
\eeqn
with $p,q \in \mathbb{N}$.
By varying $p,q$ we obtain GESs of all dimensions less than or equal to $12$, which is the largest number we can achieve here.

One could also combine both approaches and obtain vectors on $A_1$ with some fraction of the entries being monomials and the rest polynomials. Notably, this would lead in many cases to GESs of the same dimension but different spanning vectors.

Further, notice that the constructions of Theorems \ref{smaller-dim-ges}-\ref{N-kuditow} can be easily generalized to arbitrary, \tzn, not necessarily equal, local dimensions $d_i$.  As an example consider $\calH_{4,3,2}=\cee{4}\otimes \cee{3}\otimes \cee{2}$. The following sets of vectors are nUPBs having GESs in their orthocomplements ($\alpha \in \mathbb{C}$):
\beqn
( 1,\alpha^5,\alpha^{10},\alpha^{15} )_{A_1}& \otimes & ( 1,\alpha^2,\alpha^4 )_{A_2} \otimes ( 1,\alpha )_{A_3},\\
( 1,\alpha^3,\alpha^{6},\alpha^{9} )_{A_1}& \otimes &( 1,\alpha^2,\alpha^4 )_{A_2} \otimes ( 1,\alpha )_{A_3}.
\eeqn
The dimensions of the GESs are, respectively, $3$ and $9$ with the maximal available in this case equal to $11$. 

It is not entirely clear what a generalization of Theorem \ref{produktowe-lepsze} should look like if one were interested in optimizing the dimension, and whether we could benefit at all from considering polynomials  in any case of unequal dimensions. 

In any such case, however, it is possible to apply the same reasoning as above in the case of equal local dimensions to obtain smaller GESs.

\section{Genuinely entangled multipartite states}\label{splatanie-gesy}

Once we have constructed subspaces with the desired properties, it is natural to ask whether they could find an application in a construction of genuinely entangled mixed states, a task which is notoriously difficult. It is quite an obvious conclusion that they can be directly utilized with this purpose.  As a matter of fact, {\it any} state with its support in a GES is GME. Below we discuss a class of such states.

A particularly simple, yet very important, example of a GME state supported on a GES is given by the normalized projection onto it, that is:
\beqn\label{stany-gme}
\rho=\frac{\calP_{\mathrm{GES}}}{d_{\mathrm{GES}}},
\eeqn
where $d_{\mathrm{GES}}$ is the dimension of a GES with projection $\calP_{\mathrm{GES}}$. An interest in such states stems from the fact that their ranks are maximal.

Note that the above makes no assumptions regarding the nature of the subspace orthogonal to a GES. In particular, it could be spanned by an nUPB. Then
\beqn\label{stany-ges}
\rho=\frac{1}{D-d_{\mathrm{nUPB}}}(\jedynka_{D}-\calP_{\mathrm{nUPB}}),
\eeqn
where  $\calP_{\mathrm{nUPB}}$ is a projection onto a $d_{\mathrm{nUPB}}$-dimensional subspace spanned by an nUPB, $D$ and $\mathbbm{1}_D$ are, respectively, the dimension of the whole Hilbert space and the identity operator acting on it.

In this case the construction is an analog of the construction of bound entangled states given first in Ref. \cite{upb-bennett}. Unfortunately, contrary to the case of CESs orthogonal to UPBs, we cannot guarantee that the states (\ref{stany-ges}) are positive after the partial transpose in any cut and in consequence bound entangled (see, \np, Refs. \cite{TothGuhne-fullyPPT,Huber-PPT-BE} for examples of such states). This happens because orthonormalization in general introduces entanglement into the spanning vectors (see Section \ref{preliminary}). The problem with nUPBs in this context was already noticed in \cite{cmp-upb}. In fact, we have evidence that the states (\ref{stany-ges}) are not positive after the partial transpose with respect to any of the cuts \cite{approach}.

Note that the state (\ref{stany-gme}) can be made arbitrarily close to the maximally mixed one (the normalized identity on the whole Hilbert space) by
increasing local dimensions [cf. Eq. (\ref{dim-ges-max})], still being genuinely multiparty entangled.

We conclude this section with an observation that it is straightforward to construct witnesses of genuine entanglement for such states  \cite{LewensteinWitnesses2001}.
\begin{fakt}
The following observable
\beqn
W_{\mathrm{GES}}=\frac{1}{d_{\mathrm{nUPB}}-\epsilon D} (P_{\mathrm{nUPB}}-\epsilon \jedynka_{D})
\eeqn
with 
\beq
\epsilon=\min _{\ket{\psi_{\mathrm{biprod}}}\;}\bra{\psi_{\mathrm{biprod}}}P_{\mathrm{nUPB}}\ket{\psi_{\mathrm{biprod}}},
\eeq
where the minimization is over all pure states that are biproduct states with respect to any bipartition, is a genuine entanglement witness for the states (\ref{stany-ges}).
\end{fakt}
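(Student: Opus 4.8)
The plan is to verify that $W_{\mathrm{GES}}$ is a valid witness of genuine multipartite entanglement in the standard sense: it must be non-negative on every biseparable state, yet detect (take a negative value on) the state $\rho$ of Eq.~(\ref{stany-ges}). First I would recall the structure: $W_{\mathrm{GES}} = (d_{\mathrm{nUPB}} - \epsilon D)^{-1}(P_{\mathrm{nUPB}} - \epsilon \jedynka_D)$, where $\epsilon$ is by definition the smallest overlap of $P_{\mathrm{nUPB}}$ with a biproduct pure state. Since the span of the nUPB is a proper subspace and its orthocomplement is a GES (by Lemma~\ref{glowny-lemat}, once the spanning property holds for every cut), no biproduct pure state lies entirely in that orthocomplement, hence the minimum overlap $\epsilon$ is strictly positive; one should also note $\epsilon < 1$ trivially (e.g. any product vector not in $\mathrm{span}\;\mathrm{nUPB}$ has overlap $<1$, or simply because $P_{\mathrm{nUPB}}\le\jedynka_D$ with strict inequality somewhere), and $\epsilon < d_{\mathrm{nUPB}}/D$ since $\mathrm{Tr}\,P_{\mathrm{nUPB}} = d_{\mathrm{nUPB}}$ forces the minimal eigenvalue-type quantity over a restricted set to be below the average; this makes the normalizing prefactor $d_{\mathrm{nUPB}}-\epsilon D$ positive, so multiplying inequalities by it preserves their direction.

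Next I would check non-negativity on biseparable states. It suffices to check it on pure biproduct states $\ket{\psi_{\mathrm{biprod}}}$, since a general biseparable state is a convex mixture of such (possibly across different cuts) and $W_{\mathrm{GES}}$ is linear, so $\mathrm{Tr}(W_{\mathrm{GES}}\,\rho_{\mathrm{bisep}})$ is a convex combination of the pure-state expectation values. For a pure biproduct state,
\beq
\bra{\psi_{\mathrm{biprod}}} (P_{\mathrm{nUPB}} - \epsilon\jedynka_D) \ket{\psi_{\mathrm{biprod}}} = \bra{\psi_{\mathrm{biprod}}} P_{\mathrm{nUPB}}\ket{\psi_{\mathrm{biprod}}} - \epsilon \ge 0
\eeq
by the very definition of $\epsilon$ as the minimum of that overlap. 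Dividing by the positive constant $d_{\mathrm{nUPB}}-\epsilon D$ gives $\bra{\psi_{\mathrm{biprod}}} W_{\mathrm{GES}}\ket{\psi_{\mathrm{biprod}}} \ge 0$, as required.

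Finally I would show that the witness detects $\rho$ from Eq.~(\ref{stany-ges}). Recall $\rho = (D-d_{\mathrm{nUPB}})^{-1}(\jedynka_D - \calP_{\mathrm{nUPB}})$, which is supported in the GES and satisfies $P_{\mathrm{nUPB}}\rho = 0$. Hence
\beq
\mathrm{Tr}(W_{\mathrm{GES}}\,\rho) = \frac{1}{d_{\mathrm{nUPB}}-\epsilon D}\left(\mathrm{Tr}(P_{\mathrm{nUPB}}\rho) - \epsilon\,\mathrm{Tr}\,\rho\right) = \frac{-\epsilon}{d_{\mathrm{nUPB}}-\epsilon D} < 0,
\eeq
since $\epsilon>0$ and the denominator is positive. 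This confirms $W_{\mathrm{GES}}$ witnesses the genuine entanglement of $\rho$. The one genuinely non-routine ingredient — the main obstacle — is establishing $\epsilon > 0$ rigorously, i.e.\ that the continuous minimization over the (compact, by normalization and the finite union over bipartitions) set of biproduct pure states cannot reach $0$; this is exactly the content of the GES property for $\mathrm{span}\;\mathrm{nUPB}^{\perp}$, so it follows from Lemma~\ref{glowny-lemat} together with the spanning property verified in Theorems~\ref{smaller-dim-ges}--\ref{produktowe-lepsze}, but it is worth stating explicitly that compactness plus strict positivity pointwise gives a strictly positive minimum.
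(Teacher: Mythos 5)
Your proof is correct and is precisely the standard witness argument the paper has in mind: the paper itself gives no proof of this Fact, merely calling the construction straightforward and citing Ref.~\cite{LewensteinWitnesses2001}, so your write-up supplies the intended details (reduction to pure biproduct states by convexity, non-negativity from the definition of $\epsilon$, strict negativity on $\rho$ because $P_{\mathrm{nUPB}}\rho=0$, and $\epsilon>0$ from the GES property of the orthocomplement together with compactness of the set of normalized biproduct states). One step should be tightened: the inference that $\mathrm{Tr}\,P_{\mathrm{nUPB}}=d_{\mathrm{nUPB}}$ ``forces the minimum over a restricted set to be below the average'' is not valid for an arbitrary restricted set. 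It does hold here because the biproduct states contain an entire product orthonormal basis $\{\ket{i_1\cdots i_N}\}$, over which the expectation values of $P_{\mathrm{nUPB}}$ average to $d_{\mathrm{nUPB}}/D$, giving $\epsilon\le d_{\mathrm{nUPB}}/D$; strictness follows because equality would force every fully product vector (each extends to such a basis) to have overlap exactly $d_{\mathrm{nUPB}}/D<1$, contradicting the fact that the nUPB members are themselves fully product with overlap $1$. With that patch the prefactor $d_{\mathrm{nUPB}}-\epsilon D$ is indeed positive and the rest of your argument goes through.
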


\section{Conclusions and outlook}\label{konkluzje}
We have investigated the relation between unextendible product bases and genuinely entangled subspaces. 
In particular, we have provided ways of constructing  non--orthogonal UPBs leading to subspaces containing solely states being genuinely entangled in their orthocomplement.
Moreover, 
we have also demonstrated how these genuinely entangled subspaces can be used for a construction of genuinely entangled $N$--partite states of any local dimension.  Our study gives further insight into the structure of multipartite entanglement. From a practical point of view, genuinely entangled subspaces are natural candidates for quantum error correction codes, where subspaces with well established properties are utilized  \cite{ScottQECC}. On the other hand, when treated as sources of genuinely entangled multipartite states, they may also find  applications in other areas of quantum information theory, where the usefulness of such states has already been recognized, such as, {\it e.g.}, quantum metrology \cite{GezaMetro2012,HyllusMetro2012,AugusiakMetro2016}, dense coding \cite{PrabhuDenseCoding}, or key distribution \cite{EppingQKD}.

The research presented here provokes several natural questions.
The most obvious is about useful ways of constructing GESs with dimensions not covered by the given constructions, in particular, ones saturating the bound given by (\ref{wymiary-ges}). Phrasing it differently, it is a question about the minimal nUPBs with the desired properties.
Above all, however, it remains unknown whether there exist orthogonal UPBs that give rise
to genuinely entangled subspaces, and if they do exist, which dimensions they achieve. In this respect it would be particularly interesting to verify whether they  exist in all the cases considered here.

From a more general perspective, one could also try to devise methods to build \gesy of the maximal dimension abandoning the requirement for them to be constructed from nUPBs. 
We have only touched upon this problem in the present paper in Sec. \ref{qubit-geses}, where we showed how to complete a three qubit two dimensional GES to a maximal one.
It would  also be interesting to investigate different characteristics, in particular  entanglement properties, of the GESs introduced here. These issues  will be addressed  elsewhere \cite{approach}.

We believe our investigations might give new impetus to the research on non--orthogonal unextendible product bases.
\section{Acknowledgments}
We thank K. \.Zyczkowski for discussions. This project has received funding from the European Union's Horizon 2020 Research
and Innovation Programme under the Marie Sk\l{}odowska-Curie Grant No. 705109.

\bibliography{cytacje3}
\appendix

\section{Turning a set $\calB$ [Eq. \ref{upb-alfa}] into a basis $\bar{\calB}$ [Eq. \ref{baza}] with the spanning property}\label{App-A}

In this appendix we give details of the construction of a basis $\bar{\calB}$ [Eq. \ref{baza}] from a continuous set of vectors  $\calB$ [Eq. \ref{upb-alfa}].
For clarity, we focus on the three-partite setup. The argumentation is unaffected in the case of arbitrary number of parties.

 Let there be given a set of vectors from $\cee{m_1}\otimes \cee{m_2}\otimes \cee{m_3}$:
\beqn
\calB=\{\ket{\Psi(\alpha)}=\ket{\psi_1(\alpha)}_{A_1}\otimes \ket{\psi_2(\alpha)}_{A_2}\otimes \ket{\psi_3(\alpha)}_{A_3} \;  |\;\alpha \in \ce \}, \non
\eeqn
where
\beqn
\ket{\psi_k(\alpha)}_{A_k}=(f_1^{(k)}(\alpha),f_2^{(k)}(\alpha),\dots, f_{m_k}^{(k)}(\alpha))
\eeqn
 with $\{ f_i^{(k)}(\alpha)\}_{i}$, $k=1,2,3$, being linearly independent polynomials. Assume, moreover, the linear independence also to hold for  $\{ f_i^{(k)}(\alpha)f_j^{(l)} (\alpha)\}_{i,j}$, $k,l=1,2,3$, $k\ne l$, being the coordinates of the vectors:
\beqn
\ket{\psi_{23}(\alpha)}_{A_2A_3}:=\ket{\psi_2(\alpha)}_{A_2}\otimes \ket{\psi_3(\alpha)}_{A_3}, \\
\ket{\psi_{13}(\alpha)}_{A_1A_3}:=\ket{\psi_1(\alpha)}_{A_1}\otimes \ket{\psi_3(\alpha)}_{A_3}, \\
\ket{\psi_{12}(\alpha)}_{A_1A_2}:=\ket{\psi_1(\alpha)}_{A_1}\otimes \ket{\psi_2(\alpha)}_{A_2},
\eeqn 
which are local vectors for groups of two parties corresponding to all possible bipartitions, respectively, $A_1|A_2 A_3$, $A_2|A_1 A_3$, and $A_1A_2| A_3$.

 Denote further:
\beqn
u=\dim \spann\; \calB.
\eeqn
Obviously, is is possible to choose $u$ values of $\alpha$ so that the set:
\beq
\bar{\calB}= \{\ket{\Psi(\alpha_i)} \}_{i=1}^u
\eeq
spans the same subspace as $\calB$, in other words, choose a basis for $\spann \; \calB$.

 The claim is now that this choice can always be done in such a way that any $m_k$-- tuple of $\ket{\psi_k(\alpha_i)}$'s spans $\cee{m_k}$ and also any $m_km_l$--tuples of $\ket{\psi_{kl}(\alpha_i)}$'s span corresponding full spaces, {\it i.e.}, $\cee{kl}$. 
If the latter is to hold, any $r$--tuple  of the vectors, with $r$ being smaller than the dimension of the respective full space, must span an $r$--dimensional subspace. Exploiting this observation, we now sketch a simple procedure of building a basis with the desired properties.

Choose an arbitrary value $\alpha=\alpha_1$. We then have the first vector of the basis: 
\beqn
\ket{\Psi(\alpha_1)}&=&\ket{\psi_1(\alpha_1)}\otimes \ket{\psi_2(\alpha_1)} \otimes \ket{\psi_3(\alpha_1)}\non
&=&\ket{\psi_1(\alpha_1)}\otimes \ket{\psi_{23}(\alpha_1)} \non
&=&\ket{\psi_2(\alpha_1)}\otimes \ket{\psi_{13}(\alpha_1)} \non
&=&\ket{\psi_3(\alpha_1)}\otimes \ket{\psi_{12}(\alpha_1)} .
\eeqn
Now, we choose $\alpha=\alpha_2$ in such a way that the functions 
\beqn
\{ \psi_1(\alpha_1), \psi_1(\alpha_2) \}
\eeqn
are linearly independent. It is possible since otherwise it would mean that  
\beq
\{ \ket{\psi_1(\alpha_1)},\ket{\psi_1(\alpha)} \}
\eeq
are linearly dependent functions for all $\alpha$, which obviously is false as the coordinates of $\psi_1$ are linearly independent. The number $\alpha_2$ cannot be arbitrary though. Crucially, however, there is a continuum of good values $\alpha_2$ since the coordinates of $\psi$'s are  polynomials. One explicit way to realize this is by  considering the rank, denote it $r$, of the matrix:
 \beqn
 \calM_1(\alpha):=
\left(
\begin{array}{cccc}
f_1^{(1)}(\alpha_1) & f_2^{(1)}(\alpha_1) & \dots & f_{m_1}^{(1)}(\alpha_1) \\
f_1^{(1)}(\alpha) & f_2^{(1)}(\alpha) & \dots & f_{m_1}^{(1)}(\alpha)
\end{array}
\right) \non
 \eeqn
 with different values of $\alpha$. It holds that $r[\calM_1(\alpha)]=1$ if and only if all $2$ by $2$ minors of $\calM_1$ are rank one. This condition gives a system of polynomial equations on $\alpha$, which only can have a finite number of solutions. Importantly, due to the same reason, $\alpha_2$ can be taken in such a manner that the spanning will also hold for all of the remaining sets of vectors. In such a manner, we construct $\ket{\Psi(\alpha_2)}$ for which each of the following sets spans a two dimensional subspace:
\beqn
 \{\ket{\psi_1(\alpha_1)},\ket{\psi_1(\alpha_2)}\}, \{\ket{\psi_{23}(\alpha_1)},\ket{\psi_{23}(\alpha_2)}\},\\
 \{\ket{\psi_2(\alpha_1)},\ket{\psi_2(\alpha_2)}\},
 \{\ket{\psi_{13}(\alpha_1)},\ket{\psi_{13}(\alpha_2)}\},\\
 \{\ket{\psi_3(\alpha_1)},\ket{\psi_3(\alpha_2)}\},
  \{\ket{\psi_{12}(\alpha_1)},\ket{\psi_{12}(\alpha_2)}\}.
\eeqn
These arguments can now be applied for the remaining values of $\alpha_i$, leading, in turn, to a basis $\bar{\calB}=\{\ket{\Psi(\alpha_i)}\}_{i=1}^u$ with the spanning property across any bipartition as desired.

Let us illustrate this procedure with an elementary relevant example. Consider the following set of vectors:
\beqn\label{przyklad-procedura}
(1,\alpha+\alpha^2)_{A_1} \otimes (1,\alpha^2)_{A_2}\otimes (1,\alpha)_{A_3},\quad \alpha \in \cee{}.
\eeqn
We would like to build a basis corresponding to this set with the spanning property on each subsystem. We begin with taking $\alpha=0$ to obtain $(1,0)\otimes (1,0)\otimes (1,0)$. In the next steps, the values of $\alpha$ can be arbitrary with the only constraint that $\alpha\ne -1$, as otherwise we would not have the spanning property on $A_1$ (or on $A_2$ if we chose $\alpha=1$ at some point).

The set (\ref{przyklad-procedura}) corresponds to the three qubit nUPB considered in Section \ref{polynomial-coordinates}. In the orthocomplement of its span there is a two dimensional GES.
\section{Linear independence of coordinates of vectors on subsystems}\label{App-B}

Here we show the following simple lemma.

\begin{lem}\label{lemat-apendix}
Let there be given a set of linearly independent functions $\{S_i(x) Q_j(x)\}_{i,j}$, $i=1,2,\dots m$, $j=1,2,\dots,n$. Then, also $\{S_i\}_i$ and $\{Q_j\}_j$ are sets of linearly independent functions.
\end{lem}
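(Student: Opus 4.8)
The plan is to prove the contrapositive for each factor separately: if $\{S_i\}_i$ were linearly dependent, then so would be $\{S_i Q_j\}_{i,j}$, contradicting the hypothesis; and symmetrically for $\{Q_j\}_j$. This is the natural route because linear dependence of the ``small'' families propagates upward to the product family in a completely explicit way.

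First I would suppose, for contradiction, that $\{S_i\}_{i=1}^m$ is linearly dependent, so that there exist scalars $c_1,\dots,c_m$, not all zero, with $\sum_{i=1}^m c_i S_i(x) = 0$ identically. Multiplying this identity by any fixed $Q_j(x)$ gives $\sum_{i=1}^m c_i S_i(x) Q_j(x) = 0$ for every $j$. In particular, fixing one index $j_0$, the coefficients $c_{i}$ attached to the functions $S_i Q_{j_0}$ (and $0$ attached to all $S_i Q_j$ with $j\neq j_0$) form a nontrivial linear combination of the family $\{S_i Q_j\}_{i,j}$ that vanishes identically. This contradicts the assumed linear independence of $\{S_i Q_j\}_{i,j}$. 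Hence $\{S_i\}_i$ is linearly independent. The argument for $\{Q_j\}_j$ is identical with the roles of the two families interchanged: from $\sum_j d_j Q_j = 0$ with not all $d_j$ zero, multiply by a fixed $S_{i_0}$ to obtain a nontrivial vanishing combination of the $S_i Q_j$.

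I do not expect any genuine obstacle here — the statement is elementary. The only point that deserves a word of care is that the product family $\{S_i Q_j\}_{i,j}$ must be viewed as an indexed family (a list with $mn$ entries), not as a set in which coincidences among the products have been collapsed; linear independence is asserted with respect to this indexing, and that is exactly what makes the ``multiply by one factor'' trick produce a legitimate nontrivial relation. With that understood, the two symmetric cases above constitute the complete proof, and no computation beyond the displayed one-line manipulations is needed.
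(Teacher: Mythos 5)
Your proof is correct and follows essentially the same route as the paper's: a proof by contradiction that lifts a vanishing combination $\sum_i c_i S_i=0$ to a nontrivial vanishing combination of the products, your choice of coefficients ($c_i$ on $S_iQ_{j_0}$, zero elsewhere) being the special case $b_j=\delta_{j,j_0}$ of the paper's $d_{i,j}=c_i b_j$. No issues.
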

\begin{proof}
Assume to the contrary that the functions in one of these sets, say $\{S_i\}_i$, are linearly dependent. The latter means that there is a set of $m$ numbers $c_i$, not equal to zero simultaneously, such that 
\beqn\label{na-S}
\sum_{i=1}^{m} c_i S_i(x)=0, \quad \forall x.
\eeqn
 Consider now a linear combination 
 \beqn
\frakL(\{d_{i,j}\}):= \sum_{i,j=1}^{m,n} d_{i,j}S_i(x)Q_j(x).
 \eeqn
Clearly, by taking $d_{i,j}=c_i b_j$ with $c_i$'s such that Eq. (\ref{na-S}) holds and not all $b_k$'s are zero, we obtain:
\beqn
\frakL(\{c_i b_j\})=\sum_{j=1}^{n} b_j \left[\sum_{i=1}^m c_{i}S_i(x)\right] Q_j(x)= 0, \quad \forall x.
\eeqn
This would mean that the functions $S_iQ_j$ are linearly dependent, a contradiction with the assumption of the lemma.  
Hence, $S_i$'s are linearly independent. The same is true for the set of $Q_j$'s.
\end{proof}

The application of this lemma to our purposes is straightforward. We consider $N$--partite product vectors [see Eq. (\ref{upb-alfa})] $\bigotimes_{k=1}^{N}\ket{\psi_k(\alpha)}_{A_k}$ with coordinates being products of functions of $\alpha$ stemming from the multiplication of the coordinates of the local vectors for each of the parties. The latter are linearly independent for any $A_k$ and must be such chosen that the coordinates of any of the vectors of the form 
\beqn
\bigotimes_{k\in I}\ket{\psi_k(\alpha)}_{A_k},\quad I\subset \{1,2,\dots, N\},
\eeqn
are linearly independent. Once we verify that this is the case for a set of $(N-1)$ parties $\textbf{A}\setminus A_j$, we conclude, by Lemma \ref{lemat-apendix}, that it is also true for any subset of  $\textbf{A}\setminus A_j$. Considering all such $(N-1)$--partite systems we arrive at the desired result.

It is important to bear in mind that the implication does not work in the other direction: one can construct sets of linearly independent functions, which after multiplication of their elements give rise to a set of linearly dependent functions. A relevant example [cf. Eq. (\ref{pre-nUPB})] is the set of the symmetric vectors $(1,x,x^2,\dots)\otimes (1,x,x^2,\dots)$ .
\end{document}